\newtheorem*{theorem*}{Theorem}
\newtheorem{corollary}{Corollary}[section]
\newtheorem{lemma}{Lemma}
\newtheorem{myTheorem}{Theorem}
\newcommand{\uar}{u.a.r.\xspace}
\newcommand{\numIns}{\mathcal{I}\xspace}
\newcommand{\numQue}{\mathcal{Q}\xspace}
\newcommand{\numDel}{\mathcal{D}\xspace}
\newcommand{\listmax}{\ell_{M}\xspace}
\newcommand{\listave}{\ell_{\mbox{\tiny ave}}\xspace}
\newcommand{\InsertCost}{\mathcal{A}^{\mbox{\tiny ins}}\xspace}
\newcommand{\htablesize}{t\xspace}
\newcommand{\defn}[1]{\textbf{\emph{#1}}}
\newcommand{\attackindices}{s\xspace}
\newcommand{\AdvTotal}{{\mathcal{B}}\xspace}
\newcommand{\AlgoName}{\textsc{Depth Charge}\xspace}
\newif\ifcomments
\title{Defending Hash Tables from Subterfuge with Depth Charge\thanks{This work is supported by NSF awards CNS-2210299, CNS-2210300, and CCF-2144410.}}
\author[1]{Trisha Chakraborty}
\author[2]{Jared Saia}
\author[3]{Maxwell Young}
\affil[1,3]{\small Department of Computer Science and Engineering, Mississippi State University, MS, USA\hspace{3cm} \texttt{tc2006@msstate.edu}, \texttt{myoung@cse.msstate.edu}\medskip}
\affil[2]{\small Department of Computer Science, University of New Mexico, NM, USA\hspace{5cm} 
\texttt{saia@cs.unm.edu}\medskip}
\date{}
\begin{document}
\maketitle

\vspace{-1.3cm}

\begin{abstract}
We consider the problem of defending a hash table against a Byzantine attacker that is trying to degrade the performance of query, insertion
and deletion operations. Our defense makes use of resource
burning (RB)---the verifiable expenditure of network resources---where the issuer of a request incurs some RB cost. Our algorithm,
\AlgoName, charges RB costs for operations based on the
depth of the appropriate object in the list that the object hashes to in the table. By appropriately setting the RB costs, our algorithm mitigates the impact of an attacker on the hash table’s performance. In particular, in the presence of a significant attack, our algorithm incurs a cost which is asymptotically less that the attacker’s cost.
\end{abstract}

\section{Introduction}


While hash tables are a popular data structure, their performance  can be significantly degraded if the objects to be stored are chosen adversarially~\cite{Crosby2003AC, Yosef2007RemoteAC, Tobin2012Hashpile}.  In an extreme case, all objects can be hashed to the same  index of the table. Under the common collision-resolution method of chaining, this attack effectively transforms the hash table into a linked list, which leads to a worst-case query time that is linear in the number of objects; this is an example of an  \defn{algorithmic complexity attack (ACA)}~\cite{Yosef2007RemoteAC, Crosby2003AC,cai2009exploiting,sun2011covert,khan2005prevention,ben2014computer}. Many data structures are vulnerable to ACAs, and designing a defense is challenging, since malicious inputs need not be large, or arrive at a high rate, in order to degrade performance; in other words, {\it ACAs are often less costly to launch than they are to defend against}. 

In hash tables, a common defensive measure is to keep the hash function secret (known only to the server) and use stronger (cryptographic) hash functions that are difficult to invert. However, side-channel attacks may allow an adversary to learn the hash function~\cite{Olekšák2022siphash}, and in distributed settings where the hash table may be stored on multiple machines, a single compromised machine may reveal the secret. Similarly, stronger hash functions offer insufficient protection, as an adversary can find objects that hash to the same index through trial and error. These vulnerabilities are discussed in Section~\ref{sec:related-work}, but a fundamental shortcoming of prior defenses is that they do not counteract the cost advantage enjoyed by the attacker.


In this work, we design and analyze a new defense for hash tables that employs \defn{resource burning (RB)}---the verifiable expenditure of a network resource---to reverse this cost asymmetry. Specifically, any user  wishing to access the hash table must pay an RB cost. By setting the amount of RB appropriately, our defense guarantees that the cost to legitimate users grows slowly as a function of an attacker's cost for launching an ACA. In practice, attackers must often pay for the resources needed to launch attacks, such as  renting compromised machines  (see~\cite{franklin:inquiry}). Therefore, the asymptotic advantage given by our approach ultimately translates into a financial edge for the defenders.
\medskip  

\noindent{\bf Our Setting.} We consider the challenging setting where both (1) the number of indices in the hash table; and (2) the hash function are fixed.\footnote{Our approach could be combined with heuristics that dynamically update the number of indices or the hash function.} Our ``fixed'' setting is particularly relevant for many applications in distributed computing. For example, in the client-server setting, changing the hash table size or the hash function can result in down time that negatively impacts quality of service for the clients.  Thus, system administrators often analyze  workload data to set the hash table size appropriately~\cite{ibm:size}. In peer-to-peer systems such as distributed hash tables (DHTs) (e.g.~\cite{6688697,falkner:profiling, stoica_etal:chord,stading2002peer}), memory and disk space is bound by the number of participating machines, and thus resizing is not possible.

Our approach allows for flexibility in setting the appropriate table size. Specifically, we parameterize our results by {\boldmath{$\listmax$}}, which is the maximum number of (legitimate) objects that are hashed to the same index. Intuitively, this parameter is small when the table is appropriately sized (see Section~\ref{sec:main-results} for  discussion).

\subsection{Model}\label{sec:model}
In our setting, there are clients, an adversary, and a server. Note that all clients are ``good''; we do not refer to ``bad'' clients, since the adversary incarnates them. Our server may represent multiple real-world servers. We now describe the key aspects of our model.
\medskip

\noindent{\bf Hash Table.} The server holds a hash table that services insertions, queries, and deletions of objects by request from the clients and the adversary. An insertion by a client is a \defn{good insertion} and the corresponding object is said to be a \defn{good object}, which is placed at an index selected uniformly at random (\uar).\footnote{We can view this as performing a hash function evaluation on the object or, more commonly, an identifier/key value/name of the object. The output is an index in the hash table whose first evaluation is an index of selected uniformly at random, while subsequent evaluations of this object always map to the same index (i.e., our hash function obeys the random oracle assumption~\cite{koblitz2015random}). } Otherwise, the insertion is a \defn{bad insertion} and the corresponding object is a \defn{bad object}; in this case, the adversary selects the index where the object is inserted. Good insertions cannot be distinguished from bad insertions; similarly, good objects cannot be distinguished from bad objects.

A collision occurs when two or more objects are inserted at the same index of the hash table, and this is resolved via the popular method of chaining. That is, the objects involved in the collision form  a \defn{list}, where the head of the list (\defn{HoL}) is located in the index of the hash table, with subsequent objects added to the tail of the list  (\defn{ToL}) in the order that they are inserted.  The \defn{length} of a list at index $i$ is the number of objects stored at index $i$. The \defn{depth} of an object is the position measured from the head of its list; the minimum depth is $1$. If a list exists at the index of insertion, then an object inserted at that index is added to the ToL.\footnote{A design where objects are inserted at the HoL are also vulnerable to ACAs and would result in essentially the same analysis.}

%

In addition to insertion, the hash table also handles query and delete requests. A query (deletion) from a client is said to be a \defn{good query} (\defn{good deletion}); otherwise, it is a \defn{bad query} (\defn{bad deletion}). Good queries (deletions) cannot be distinguished from bad queries (deletions).  Clients only issue queries for good objects, while the adversary may issue queries for any object; this captures a pessimistic setting where the objects inserted by the adversary are not useful and only serve to degrade the performance of the algorithm.  

Clients may delete good objects, and the adversary may delete bad objects but not good objects. We note that, in practice, the situation may be even more restrictive: a client might only be allowed to delete those objects it inserts. This can be accomplished by having the client share a secret with the server in order to verify ownership of the object. 
\medskip

\noindent{\bf Resource Burning.} Upon receiving a request to insert or query an object the server may issue a \defn{resource-burning (RB) challenge} to the requester (i.e., a client or the adversary). The requester must return a solution to the RB challenge before the corresponding request is satisfied. To specify the RB cost $x$, for any positive integer $x$,  we will refer to an \defn{{\boldmath{$x$}}-hard} RB challenge.

The mechanisms for issuing and verifying RB challenges can be protected from  attack themselves, given their narrow functionality~\cite{waters:new}. Furthermore, significant work has gone into addressing the many practical details of designing and deploying RB challenges, such as handling device heterogeneity, pre-computation attacks, and the reuse of old solutions (see~\cite{ali:foundations,li:sybilcontrol,walfish2010ddos}).
\medskip


\noindent{\textbf{Performance Metrics.}} We use two metrics for gauging performance: (1) the \defn{RB cost} of solving RB challenges and (2) the \defn{latency} for servicing requests.  Regarding (1), the \defn{algorithm's RB cost} is the sum of the hardness values for all RB challenges solved by the clients; likewise, the \defn{adversary's RB cost} is the sum of the hardness values for RB challenges it solves.

To quantify (2), if the request is an insertion, then the latency equals $1$, since we assume that each list maintains a pointer to the ToL. If the request is a query and the object exists in that list, then the latency equals the object's depth in that list; otherwise,  the latency equals the list length at the index where the object would have been stored if it existed in the table.  
\medskip


\noindent{\bf Adversary.} We consider a \defn{Byzantine adversary} that is not constrained to obey protocol and is not computationally bounded.  The adversary has full knowledge of the hash table's configuration, as well as the state of the clients and server (see Section~\ref{sec:query-cost-latency}).  The adversary can instantaneously create as many (bad) objects as it likes that hash to any targeted index of the hash table. In other words, the number of bad objects and the indices into which they are inserted is chosen by the adversary. 

In contrast, the adversary has no control over where good objects are inserted; each good object is inserted into an index chosen uniformly at random from the set of all indices. We consider both the case where (1) the good queries are generated by the adversary (Section~\ref{sec:query-cost-latency}); and (2) the good queries are distributed \uar over all indices (Section~\ref{sec:expected-query-cost}).


\subsection{Main Results}\label{sec:main-results}

For requests, we let {\boldmath{$\numIns$}} be the number of good insertions; let {\boldmath{$\numQue$}} and {\boldmath{$\numDel$}} be the number of good queries and good deletions for objects that exist in the hash table. 

In discussing the hash table, we denote the number of indices in the hash table by {\boldmath{$\htablesize$}}; the maximum number of good objects in any index by {\boldmath{$\ell_M$}}. The number of good objects in index $i$ is {\boldmath{$\ell_i$}}, and  the average number of good objects, $(1/t)\sum_{i=1}^t \ell_i$, is denoted by {\boldmath{$\listave$}}.

Throughput, we use {\boldmath{$\AdvTotal$}} to denote the total RB cost incurred by the adversary. We state our main result below regarding our defense algorithm, \AlgoName.

\begin{myTheorem}\label{thm:main-theorem}
\AlgoName guarantees the following properties:\vspace{-0pt}
\begin{enumerate}[leftmargin=17pt]

\item (Single Requests) Any single insertion has RB cost $O(\sqrt{\AdvTotal} + \listmax)$ and latency $O(1)$.
Any single query or deletion has an RB cost and latency that are each 
$O(\sqrt{\AdvTotal} + \listmax)$.\label{property:single-operation}\vspace{2pt}

\item (Amortized Requests) When $\numIns$, $\numQue$, and $\numDel$ are set by an Byzantine adversary, the total RB cost and the total latency are each $O((\numIns + \numQue + \numDel + \sqrt{(\numIns + \numQue + \numDel)\AdvTotal} \hspace{1pt})\listmax^2).$\label{property:worst-case-costs} \vspace{2pt}

\item (Randomly Queried Indices) Consider $Q$ queries where the corresponding objects belong to indices of the hash table chosen \uar.  For $Q \geq \listmax^2 \AdvTotal$, the average cost per query is $O(\listave)$ in expectation.\label{property:expected-costs}\vspace{1pt}

\end{enumerate}
\end{myTheorem}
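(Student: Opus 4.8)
\medskip
\noindent\textbf{Proof plan.}
Everything hinges on one structural fact about the chaining lists, which I would state and prove first as a lemma: at every moment, the list at any index $i$ contains at most $O(\sqrt{B_i})$ bad objects, where $B_i$ is the total RB cost the adversary has paid for insertions that land at index $i$; consequently any list has at most $O(\sqrt{\AdvTotal})$ bad objects and, together with the definition of $\listmax$, length at most $\listmax + O(\sqrt{\AdvTotal})$ at all times. To prove this, fix a moment and a list, let $\beta_1,\dots,\beta_m$ be the bad objects currently in that list ordered by insertion time, and note that when $\beta_j$ was inserted the objects $\beta_1,\dots,\beta_{j-1}$ were already present (each was inserted earlier and has not been deleted since), so $\beta_j$ was appended at depth at least $j$. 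Since \AlgoName charges RB cost $\Theta(d)$ for an operation whose relevant depth is $d$, the adversary paid $\Omega(j)$ for $\beta_j$, so $B_i = \Omega(\sum_{j=1}^{m} j) = \Omega(m^2)$ and $m = O(\sqrt{B_i})$; the argument refers only to objects currently present and their relative insertion order, so deletions (by the adversary or by clients) do not weaken it.

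Property~\ref{property:single-operation} is a direct corollary. An insertion has latency $1$ by definition of the latency metric, and RB cost $\Theta(\text{length of its target list}) = O(\sqrt{\AdvTotal}+\listmax)$. A query or deletion reaches an object at some depth $d$ that is at most the list length (and a query for a missing object is charged the list length), so its latency and its RB cost are both $\Theta(d) = O(\sqrt{\AdvTotal}+\listmax)$.

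For Property~\ref{property:worst-case-costs}, write $N = \numIns + \numQue + \numDel$ and, for the $k$-th operation, let $i_k$ be the index it touches; by the structural fact its cost and its latency are both $O(\listmax + \sqrt{B_{i_k}})$, using the final spend $B_{i_k}$ at index $i_k$ to upper bound the spend there at the time of the operation. Summing and grouping by index makes the total cost (and the total latency) $O(N\listmax + \sum_i n_i\sqrt{B_i})$, where $n_i$ counts operations at index $i$, so $\sum_i n_i = N$ and $\sum_i B_i \le \AdvTotal$. The crux is the cross term $\sum_i n_i\sqrt{B_i}$: a generic bound (put all $N$ operations and all of $\AdvTotal$ on one index) only gives $N\sqrt{\AdvTotal}$, so the improvement to $\sqrt{N\AdvTotal}$ must use structure. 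The route I would take is Cauchy--Schwarz, $\sum_i n_i\sqrt{B_i}\le \sqrt{N}\cdot\sqrt{\sum_i n_i B_i}$, followed by a bound $\sum_i n_i B_i = O(\listmax^2\AdvTotal)$ obtained by controlling how much legitimate traffic can accumulate at a single heavily-attacked index (at most $\listmax$ good objects coexist there at once). This yields $\sum_i n_i\sqrt{B_i} = O(\listmax\sqrt{N\AdvTotal})$ and total $O(N\listmax + \listmax\sqrt{N\AdvTotal}) = O((N+\sqrt{N\AdvTotal})\listmax^2)$. I expect this to be the main obstacle: ruling out an adversary that both concentrates its budget and funnels an unbounded amount of legitimate traffic through one index is exactly what forces a quantitative handle on per-index workload, and it is presumably the source of the extra $\listmax$ factor in the statement.

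For Property~\ref{property:expected-costs}, fix the adversary's strategy and condition on the table's state; let $\ell_i$ and $b_i = O(\sqrt{B_i})$ be the numbers of good and bad objects at index $i$. A query whose object lies in a \uar index has, in expectation over that choice, cost $\frac{1}{t}\sum_i \Theta(\ell_i + b_i) = \Theta(\listave) + O(\frac{1}{t}\sum_i \sqrt{B_i})$, and Cauchy--Schwarz gives $\frac{1}{t}\sum_i\sqrt{B_i} \le \frac{1}{t}\sqrt{t\AdvTotal} = \sqrt{\AdvTotal/t}$; by linearity of expectation this is also the expected average cost per query over the $Q$ queries. It remains to absorb $\sqrt{\AdvTotal/t}$ into $\listave$: under the natural reading that the $Q$ queries target distinct stored objects, $Q$ is at most the number of stored objects $t\listave$, so $Q \ge \listmax^2\AdvTotal$ forces $t\listave \ge \listmax^2\AdvTotal$, hence $\AdvTotal/t \le \listave/\listmax^2 \le \listave$ and (using $\listave = \Omega(1)$) $\sqrt{\AdvTotal/t} \le \sqrt{\listave} = O(\listave)$, giving an expected average cost per query of $O(\listave)$. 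The only delicate point is exactly which relation among $Q$, $t$, and the number of stored objects is invoked to close this last inequality; the rest is linearity of expectation plus the structural fact.
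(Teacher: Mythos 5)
Your Property~\ref{property:single-operation} argument matches the paper's: bound the bad objects per list by $O(\sqrt{\AdvTotal})$ via the arithmetic-series lower bound on the adversary's insertion cost, then add $\listmax$ good objects (Lemma~\ref{l:chain-length}).

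Your approach to Properties~\ref{property:worst-case-costs} and~\ref{property:expected-costs}, however, has a genuine gap, and it is exactly the one you flagged as ``the crux.'' You charge every operation its \emph{worst-case} per-operation cost $O(\listmax+\sqrt{B_{i_k}})$ and then try to control the sum $\sum_i n_i\sqrt{B_i}$ via Cauchy--Schwarz, which would need the bound $\sum_i n_i B_i = O(\listmax^2\AdvTotal)$. That bound is false: the Byzantine adversary can put all of $\AdvTotal$ into one index~$i^*$ (so $B_{i^*}=\Theta(\AdvTotal)$) and then schedule all $\numQue$ good queries against a single good object in that list. Then $n_{i^*}\ge\numQue$ is unbounded, and $\sum_i n_i B_i = \Omega(\numQue\cdot\AdvTotal)$, which exceeds $\listmax^2\AdvTotal$ by an arbitrary factor. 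There is no ``per-index workload'' bound of the kind you hoped for, because the adversary controls the query schedule. What saves the theorem is precisely the move-to-front step that your argument never uses: after the first query at index~$i^*$ the good object sits at depth~1, and subsequent queries there cost $O(1)$ unless the adversary pays again to push it back. The paper formalizes this with the accounting method (Lemma~\ref{lem:subset-queries}): each good object carries a wallet whose balance always dominates its depth, the adversary's $d_r$ bad queries in round $r$ add at most $d_r$ dollars to each of the $\le\ell_i$ wallets while costing the adversary $\Omega(d_r^2)$, and Jensen plus Cauchy--Schwarz then convert $\sum_r d_r$ and $\sum_i$ into the $\sqrt{q_i\AdvTotal_i}$ and $\sqrt{\numQue\AdvTotal}$ terms. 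The extra $\listmax^2$ in the statement comes from the insertion analysis (Lemmas~\ref{l:max-alg-attack-indices}--\ref{l:good-indices}) and the $\ell_i\le\listmax$ wallets per list, not from any per-index cap on $n_i$.

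Property~\ref{property:expected-costs} inherits the same problem, and you also introduce an assumption not in the model: that the $Q$ queries target distinct stored objects, which you use to get $Q\le t\,\listave$. The paper's statement only says the queried \emph{indices} are chosen \uar; repeat queries to the same object are allowed, and without MTF amortization they cost $\Theta(\sqrt{B_i})$ every time. The paper instead feeds the per-list amortized bound $\ell_i(Q_i+\sqrt{2Q_i\AdvTotal_i})$ from Lemma~\ref{lem:subset-queries} into the expectation, uses $E[Q_i]=Q/t$ and Jensen's inequality ($E[\sqrt{Q_i}]\le\sqrt{Q/t}$), and then Cauchy--Schwarz over indices to land on $O(Q\,\listave+\listmax\sqrt{Q\AdvTotal})$, after which the threshold $Q\ge\listmax^2\AdvTotal$ absorbs the second term. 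In short: you need the wallet/MTF amortization as the engine for both parts; without it, the per-operation worst-case bound cannot be tightened past the $N\sqrt{\AdvTotal}$ barrier you identified.
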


\noindent{\bf Discussion.} As mentioned earlier, our results are parameterized by $\listmax$.  For $\numIns$ insertions into a table of size $t$,  $\listmax = O( \lceil \numIns/t \rceil \log t )$
with high probability in $t$ (\defn{w.h.p.}).\footnote{With probability at least $1-t^{-d}$ for some constant $d\geq 1$.}  Notably, for  
$\numIns = O(t)$, it is well known that $\listmax = O(\log t/\log\log t)$ \cite{max-load,kesselheim:load,raab1998balls}.  This case is pertinent, since many applications limit the amount by which their hash table can grow (see ~\cite{Crosby2003AC}), and the number of size increases may be very limited (e.g. see~\cite{czubak2017algorithmic}).\footnote{For example, the table in the Cisco router examined in~\cite{czubak2017algorithmic} has an initial size of $1024$, and can increase to sizes $2048$, $4096$, and $8192$.} From a theory perspective, such limited growth increases the table size by a constant factor, and using the the largest size aligns with our model.

RB is a well-established tool for securing distributed systems~\cite{gupta2020resource}.  We note that the choice of resource to burn likely depends on the specifics of the application. Given this, our algorithm is deliberately agnostic about the  resource burned, such as  computational power~\cite{wang:defending}, bandwidth~\cite{walfish:ddos},  computer memory~\cite{abadi2005moderately,dwork2003memory,dziembowski2015proofs}, and human effort~\cite{von2003captcha,oikonomou2009modeling}.



To provide context for Theorem~\ref{thm:main-theorem}, it is helpful to compare \AlgoName to a standard hash table with chaining.  In the latter, the adversary may create a list that has size linear in the number of bad objects for ``free''. This attack leads to poor latency if good objects reside at the ToL. By comparison, Property~\ref{property:single-operation} bounds improves (roughly) quadratically by bounding the longest list length to be $O(\sqrt{\AdvTotal} + \listmax)$; clearly, this holds for any query, even for objects that do not exist in the table. Another implication of Property~\ref{property:single-operation}  is that when there is little-to-no attack (i.e., when $\AdvTotal \approx 0$), the RB cost and latency are each roughly $O(\listmax)$, which should be small (i.e., logarithmic in $t$) for an appropriately sized table, as discussed above.

Regarding Property~\ref{property:worst-case-costs},  for multiple requests scheduled by a Byzantine adversary, \AlgoName retains an asymptotic advantage when under significant attack. Conversely, when the attack is not large relative to  $\numIns+\numQue+\numDel$,  \AlgoName has RB cost and latency proportional to this number of requests  and $\listmax^2$. In contrast, in a standard hash table, the adversary can amplify its attack by forcing multiple requests involving a linear-sized chain.

Finally, Property~\ref{property:expected-costs} provides bounds on the expected performance under a sequence of queries that map to indices selected \uar. Specifically, the expected cost for $Q$ such good queries is $O(Q\listave +   \ell_M \sqrt{Q\AdvTotal})$. Thus,  if that expectation holds, then the average cost per query is $O(\listave)$ when $Q$ is large relative to $\AdvTotal$ and $\listmax$. When  $\listave = O(1)$, this implies that the average cost per query is $O(1)$ in expectation. Interestingly, this is comparable to the expected $O(1)$ latency per query in a standard hash table.

\subsection{Technical Overview}\label{sec:technical-overview}
At a high level, our analysis relies on upper bounding the algorithm's cost and lower bounding the adversary's cost.  Below we sketch how to do this first for insertions, and then for queries and deletions.

\medskip
\noindent
\textbf{Insertion Costs.} We define a targeted index to be an index where there is at least one bad object and at least one good object.  Then we lower bound the adversary's cost as a function of the number of bad objects inserted into targeted indices (Lemma~\ref{l:max-alg-attack-indices}).

Next, we upper bound the total cost of good insertions as a function of the number of objects in targeted indices, noting that this cost is maximized when the bad objects are distributed as uniformly as possible across such indices (Lemma~\ref{l:attack-indices}).  We pessimistically assume that good insertions come after bad insertions, and that there are $\listmax$ good insertions in every targeted index.\medskip


\noindent{\bf Why Use Move-to-Front?}  An analysis of the longest list (Lemma~\ref{l:chain-length}) shows that the  worst-case latency per query is $O(\sqrt{\AdvTotal}+ \listmax)$. While this significantly improves over the linear latency---for example, where the adversary places all objects in a single list---that can arise in undefended hash tables, there is still room for improvement. To see why, consider that even if the adversary ceases its attack, good objects will remain near the tails of their respective lists, leading to persistently poor query latency. By moving queried objects to the head of their respective lists, we can improve their latency in subsequent queries.  

The classic move-to-front (MTF) heuristic~\cite{bentley1985amortized,hester1985self,rivest1976self} is known to improve performance in chained hash tables when they are not under attack~\cite{zobel2001memory,Nikolas2011Redesigning,Song2017RwHash}. Our motivation for using MTF in our adversarial setting is that a substantial improvement may be attained over multiple queries, since good objects can ``skip the line'' in long lists that contain mostly bad objects.


However, the adversary can cause trouble for MTF in the following manner. When the adversary queries a bad object, it is moved to the front of the list. This increments the depth of a number of good objects as large as the depth of the bad object prior to being moved to the front; this increases the query latency of these good objects. We can discourage this bad behavior by charging for a query, but how much should we charge?  Intuitively, a reasonable charge would be the depth of the queried object.\medskip


\noindent{\bf Analysis of Charging by Depth for Queries.}
To see why this is the correct charging scheme, consider a list composed of bad objects, except for a single good object $o$ at the HoL. In order to increase the depth of $o$ by $d$, the adversary must pay for $d$ bad queries. Observe that each bad query must be for a bad object with larger depth than $o$; otherwise, querying the bad object does not increase $o$'s depth. Under our charging scheme, the adversary  pays at least $\sum_{j=1}^d (j+1) = \Theta(d^2)$. Then when the algorithm next queries  $o$, it will pay an RB cost of $\Theta(d)$, and the query requires $\Theta(d)$ latency. Thus, the algorithm obtains a quadratic advantage, similar to what is achieved for our bound on insertion costs.


This charging scheme motivates the name \AlgoName\footnote{At the risk of ruining a pun via explanation, a depth charge also is a defense against subs/subterfuge.} and it guarantees that the adversary must spend continually in order to keep good objects at large depth in the list.

\medskip
\noindent
\textbf{The Amortized Analysis.} A major technical challenge of our paper is to formalize the above intuition in the general case---with multiple lists, each with potentially multiple good objects.  This analysis is challenging, since both bad and good queries can increase the depth of multiple good objects in a list. Over all lists, we need to track the depth of all good objects over a sequence of requests. We highlight that must account not only for queries---although they are what increases the depth of an object---but also insertions and deletions. Fortunately, insertions do not increase depth of other objects, given that objects are added to the ToL, so our bound on insertion cost (discussed above) can be used. As for deletions,  we can treat them as queries, since they are no worse in terms of increasing depth.

One main analytic tool used is amortized analysis; in particular, the accounting method~\cite{CLRS}. Each good object is given a (conceptual) wallet into which the algorithm makes deposits for each request that increases the depth of that object. The payments ensure a key invariant: the depth of a good object is never more than the number of dollars in its wallet. Therefore, an object's wallet always contains enough dollars to cover the cost of its next query. 
Over a sequence of requests, the total number of dollars deposited into all wallets is an upper bound on both the algorithm's RB cost and latency.


How can we relate the number of dollars deposited into wallets to the adversary's cost? This is addressed formally in  Lemma~\ref{lem:subset-queries}; however, to gain insight, let us extend our example to $q_i\geq 1$  good queries in a single list at index $i$.  Prior to each good query, there are $d_r$ bad queries that increase the depth of at most $\listmax$ good objects by $d_r$, for $r=1, ..., q_i$.  The resulting number of dollars that the algorithm places into the wallets of the corresponding $\listmax$ good objects is $\mathcal{A}_i \leq \listmax\sum_{r=1}^{q_i} d_r$, while the adversary's cost is $\AdvTotal_i \geq \sum_{r=1}^{q_i} d_r^2 = \Omega( (1/q_i) (\sum_{r=1}^{q_i} d_r)^2)$ by
 Jensen's inequality for concave functions. Therefore, the number of dollars deposited into wallets for objects in the list at index  $i$ is $\mathcal{A}_i  = O(\listmax\sqrt{q_i \AdvTotal_i})$. To this, we add the algorithm's cost for insertions, denoted by $\InsertCost_i$, to get an upper bound on all requests involving this list.

Finally, in Lemma~\ref{lem:query-cost-total} and Corollary~\ref{cor:total-worst}, we sum up the costs to the algorithm over all lists. Our previous bound on the insertion costs handles the sum of the $\InsertCost_i$ terms. To simplify    $O(\sum_i \listmax\sqrt{q_i \AdvTotal_i})$, we apply the Cauchy-Schwarz inequality to get an upper bound of $O(\listmax \sqrt{\numQue\AdvTotal})$, where $\sum_i q_i = \numQue$ is the total number of queries and $ \sum \AdvTotal_i \leq \AdvTotal$, where $\AdvTotal$ is {\it total} adversarial cost. Together, these bounds yield the expression in Property~\ref{property:worst-case-costs}.\medskip

\noindent{\bf Randomly Queried Indices.} Our result for randomly queried indices does not follow directly from Property~\ref{property:worst-case-costs}. Instead, our argument (Lemma~\ref{lem:expected-queries}) leverages the bound for a single list (Lemma~\ref{lem:subset-queries}) in order to express the total cost from the randomly queried indices as a function of $E[Q_i]$ and $E[\sqrt{Q_i}]$. The latter is the more complicated term, which is handled  by the application of Jensen's inequality for the expectation of concave functions, which shows that $E[\sqrt{Q_i}] \leq \sqrt{E[Q_i]}$.  Using the fact that $E[Q_i] = Q/t$, and summing the terms over all lists, yields the expression in Property~\ref{property:expected-costs}.

\subsection{Related Work}\label{sec:related-work}

In this section, we summarize work on RB-based defenses for a variety of attacks. Next, we discuss results from the literature on ACAs, with a focus on prior results for hash tables.\medskip

\noindent{\bf Defenses using Resource Burning.} The use of RB as a tool for solving security problems spans several decades (e.g. see the surveys~\cite{ali:foundations, gupta2020resource}).  RB-based defenses arise in many contexts, such as spam mitigation~\cite{dwork2003memory,dwork:pricing},  wireless networks~\cite{gilbert:sybilcast}, peer-to-peer systems~\cite{li:sybilcontrol,borisov:computational}, blockchains~\cite{lin2017survey}, the Sybil attack~\cite{Gupta_Saia_Young_2021}, and denial-of-service attacks~\cite{walfish2010ddos,parno2007portcullis,chakraborty2023bankrupting}.

\medskip

\noindent{\bf Prior Defenses for ACAs.} Many other common data structures and algorithms are vulnerable to ACAs, such as linked lists~\cite{atre2022surgeprotector}, quicksort~\cite{khan2005prevention, mcilroy1999killer}, cardinality sketches~\cite{Pedro2020hll}, pattern matchers~\cite{kirrage2013static,Kedar2010regex,berglund2014analyzing}, cuckoo filters~\cite{Pedro2020Cuckoo}, and bloom filters~\cite{Pedro2020Pollution}. As a result, AC attacks can impact common applications: networked applications~\cite{Chang2009coma,atre2022surgeprotector}, firewalls~\cite{czubak2017algorithmic}, web services~\cite{altmeier2016adidos}, PDF compressors~\cite{hauke2019Dos}, TCP reassembler~\cite{Sarang2005TCP,Juha2018CVE}, and intrusion detection systems~\cite{Crosby2003AC}.

In the context of hash tables, the prior literature on defending against ACAs falls into the three general categories discussed below.\medskip


\noindent{{\underline{\it (1) Choice of Hash Functions}}.} Crosby et al.~\cite{Crosby2003AC} showed the first ACA on hash tables, which caused a server to drop over 70\% of queries. The authors proposed two techniques to mitigate ACAs: (a) adding a secret value as a parameter to the hash function, and (b) using universal hash functions (UHFs). The usage of UHFs can minimize the number of collisions, but UHFs can add computational overhead on the server side. Furthermore, Yosef et al.~\cite{Yosef2007RemoteAC} demonstrated an ACA against hash tables despite the use of a secret value; the authors suggest that the secret-key length should be increased (beyond 32 bits) or be changed frequently. In a similar vein,  SipHash~\cite{aumasson2012siphash} uses a secret key (known only to the server), which is used as input to the hash function. Unfortunately, a secret key may be compromised via side-channel attacks~\cite{Olekšák2022siphash} or, in distributed  settings, by an adversary who controls one or more of the servers.  Finally, perfect hashing is a technique that guarantees no collisions (see~\cite{lu2006perfect,cercone1988finding,majewski1996family}). However, constructing perfect hash functions is time consuming and requires knowing the set of objects to be hashed, which is not always available.
\medskip 

\noindent{{\underline{\it (2) Application-Specific Defenses}}.} Many defenses against ACAs are application-specific. For example, PHP limits the number of GET and POST HTTP requests so that the adversary cannot request to store many bad objects in a hash table~\cite{PHP-online}. Another approach is the use of caching to store pre-computed results of expensive hash table lookups~\cite{zhang2019leveraging,Metreveli2012CPHASH,bender2011don}.  \medskip

\noindent{{\underline{\it (3) Switching to Deterministic Data Structures}}.} Another method for defending against ACAs is to adopt deterministic data structures with strong worst-case performance guarantees. For example, a deterministic skip list~\cite{Munro1992detskiplist} performs each insertion and  each query with a worst-case bound that is logarithmic in the number of objects. 
However, this is  inferior to the performance of a hash table, which have constant expected time per query in the absence of attack. If attacks are likely to occur over a minority of the system lifetime, then using a deterministic data structure is costly. 

More generally, deterministic data structures incur theoretical and/or practical costs exceeding that of their randomized equivalents; for example, this shortcoming is acknowledged~\cite{czubak2017algorithmic} in regards to B-trees, AVL-trees, and red-black trees, which offer worst-case logarithmic guarantees. Maintaining both a deterministic and randomized data structure might provide the advantages of both options, but such redundancy is likely to be expensive. In contrast, our algorithm's costs adapts to the degree of attack---in particular, growing slowly in the amount spent by the adversary---which allows for low cost when the attack is absent/small, and giving a favorable relative cost when the attack is large.
\medskip

\noindent{\bf Compatibility with Prior Defenses.}  Our algorithmic results may be used in conjunction with many prior solutions. For example, \AlgoName can be used alongside methods that use stronger hash functions and secret keys, and also within application-specific defenses---these approaches are not mutually exclusive. Given that there is no single approach that can completely protect against ACAs, having multiple complementary tools for defense can be useful.

\section{Our Algorithm}\label{s:algorithmg-desciption}

The pseudocode for our algorithm (with deletions omitted), \AlgoName, is presented in Figure~\ref{alg:ouralg} and is assumed to be executed by the server. Below, are the requests supported.\smallskip

\noindent{\bf Insertions.} Upon receiving an insertion request, the server responds with an RB challenge whose hardness equals $L_i+1$, where $L_i$ is  list length at index $i$. If the server receives a valid solution to this challenge, then the object is inserted at the ToL, which is assumed to require constant latency. A pointer is assumed to be kept to the ToL in order to give $O(1)$ latency per insertion. \smallskip

\noindent{\bf Queries.} Upon receiving a query request for an object, the server calculates the index $i$ where the object should be stored and traverses that list starting from the HoL. If the object is found, then the server responds with a $\Delta$-hard RB challenge, where $\Delta$ is the object's depth. Otherwise, the server  discovers that the object does not exist by traversing the entire list and then issues  an $L_i$-hard RB challenge. In the latter case, imposing a cost mitigates spurious requests by the adversary for non-existent objects, while the cost for such requests from clients can be viewed as the price for a membership test.

If a valid solution is received and the object exists in the table, then the server services the query and also performs a move-to-front operation by repositioning the queried object to the head of its corresponding list. Otherwise, the queried object does not exist in the table, and responds that the object was not found.
\medskip

\noindent{\bf Deletions.} A deletion  is performed almost identically to a query. However, in the case where the object is located, the object is deleted rather than being moved to the HoL. For ease of presentation, we omit deletions from the pseudocode in Figure~\ref{alg:ouralg}.

\begin{figure}[t!]
\centering
\begin{tcolorbox}[standard jigsaw, opacityback=0]
\begin{minipage}[h]{0.99\textwidth}
\noindent{\textsc{\large \textbf{\AlgoName} }} 
\medskip

\noindent Insert at index $i$:
\vspace{2pt}
\begin{itemize}[leftmargin=15pt]

\item Respond with an $(L_i+1)$-hard RB challenge, where $L_i$ is the list length at index $i$.\vspace{2pt} 

\item If the requester solves the RB challenge, then insert the object at the tail of the list at index $i$.
\end{itemize}

\noindent Query object at index $i$: \vspace{2pt}
\begin{itemize}[leftmargin=10pt]
\item Traverse the list at the index $i$.  If the object is found, then issue a $\Delta$-hard RB challenge to the requester, where $\Delta$ is the object's depth; else, respond with an $L_i$-hard challenge. \vspace{2pt}

\item If the requester solves the RB challenge, then if the object exists, service the query and move the object to the HoL; else, respond that the object was not found.

\end{itemize}

\end{minipage}
\end{tcolorbox}
\vspace{-12pt}\caption{Pseudocode for \AlgoName.}
\label{alg:ouralg}
\vspace{-12pt}
\end{figure}



\section{Analysis}

Our analysis of \AlgoName is presented in three pieces. First, in Section~\ref{sec:insertion-cost}, we analyze the RB cost and latency for insertions; this is a stepping stone to proving bounds on sequences of requests.  Second, in Section~\ref{sec:query-cost-latency}, we provide a bound on the longest list length (Lemma~\ref{l:chain-length}), which is used to establish Property~\ref{property:single-operation}.  We then use an amortized analysis for a sequence of queries, which is combined with our bound on insertions to prove Property~\ref{property:worst-case-costs} (in Corollary~\ref{cor:total-worst}). Third, in Section~\ref{sec:expected-query-cost}, we bound the expected RB cost and latency for a sequence of queries that occur in indices selected \uar, which allows us to establish Property~\ref{property:expected-costs} (in Lemma~\ref{lem:expected-queries}).




\subsection{Insertion Cost}\label{sec:insertion-cost}

In this section, we analyze the algorithm's RB cost over all $\numIns$ good insertions.  Define an \defn{targeted index} to be any index that contains at least one bad object and at least one good object. In the current table, let {\boldmath{$s$}} be the number of targeted indices. 

Unless specified otherwise, our analysis in this section pessimistically assumes that all targeted indices contain $\listmax$ good objects; this can only increase the cost to the algorithm. 


\begin{lemma}\label{l:max-alg-attack-indices}
Suppose the adversary places $b$ bad balls in the $s$ targeted indices. Then, the algorithm's RB cost for insertions into the attacks indices is at most $s \listmax^{2} +   b\listmax$.
\end{lemma}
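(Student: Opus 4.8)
The plan is to bound the algorithm's RB cost for insertions into the $s$ targeted indices by adding up, index by index, the cost of inserting the good objects given the bad objects already present. Recall that an insertion into a list of current length $L_i$ incurs RB cost $L_i + 1$, and that we pessimistically assume every targeted index receives exactly $\listmax$ good insertions, and (as stated in the technical overview) that the good insertions arrive \emph{after} all the bad insertions into that index. So if index $j$ holds $b_j$ bad objects, then when its good objects are inserted one after another, the list lengths encountered are $b_j, b_j+1, \dots, b_j+\listmax-1$, and the total RB cost at that index is $\sum_{k=0}^{\listmax-1}(b_j + k + 1) = b_j \listmax + \sum_{k=1}^{\listmax}k = b_j\listmax + \binom{\listmax+1}{2}$.

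Next I would sum this over all $s$ targeted indices. Using $\sum_{j=1}^{s} b_j = b$, the total RB cost for insertions into the targeted indices is $\sum_{j=1}^{s}\bigl(b_j\listmax + \binom{\listmax+1}{2}\bigr) = b\listmax + s\binom{\listmax+1}{2}$. Since $\binom{\listmax+1}{2} = \listmax(\listmax+1)/2 \le \listmax^2$ (for $\listmax \ge 1$), this is at most $s\listmax^2 + b\listmax$, which is the claimed bound. Note that this step does \emph{not} require the bad objects to be spread evenly — the per-index good-insertion cost is $b_j\listmax + \binom{\listmax+1}{2}$ regardless of how the $b_j$'s are distributed, so the bound holds for any allocation of the $b$ bad balls; the uniform-spread observation referenced in the overview will matter for the companion lemma, not here.

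The one subtlety to be careful about — and the main thing worth spelling out rather than the arithmetic — is justifying the ordering assumption: that we may pessimistically treat all bad insertions at a targeted index as preceding all good insertions there, and that we may assume exactly $\listmax$ good insertions per targeted index. For the first point, the cost of a good insertion is monotone in the list length at the moment of insertion, and inserting good objects last only makes each of those lengths as large as possible; interleaving good and bad insertions, or inserting bad objects after good ones, can only lower (or keep equal) the list lengths seen by the good insertions, hence only lower the RB cost charged to the algorithm. For the second point, $\listmax$ is by definition the maximum number of good objects at any index, so assuming every targeted index is "full" with $\listmax$ good objects (and more good insertions cannot occur) is an upper bound on the true good-insertion count at each index, and again cost is monotone in that count. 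With these monotonicity observations in place, the displayed summation gives exactly $s\listmax^2 + b\listmax$ as an upper bound, completing the proof.
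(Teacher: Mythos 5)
Your proof is correct and follows essentially the same approach as the paper: sum the per-index good-insertion cost $\sum_{k=1}^{\listmax}(b_j+k) = b_j\listmax + \listmax(\listmax+1)/2 \le b_j\listmax + \listmax^2$ over the $s$ targeted indices and use $\sum_j b_j = b$. Your explicit monotonicity justification of the two pessimistic assumptions (bad objects inserted before good, and $\listmax$ good objects per targeted index) is somewhat more thorough than the paper's, which simply asserts them as pessimistic, and your per-index arithmetic is also cleaner than the paper's intermediate display, which loosely writes $\sum_{k=1}^{\listmax}(x_i+k) < \listmax^2/2 + \listmax x_i$ (false as stated, since $\sum_{k=1}^{\listmax}k = \listmax(\listmax+1)/2 > \listmax^2/2$) before recovering with $\listmax^2$ in the following line.
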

\begin{proof}
Let $x_i$ be the number of bad objects placed by the adversary into the $i$-th targeted index, where $i=1, ..., s$.  Fix any particular index  $i$, the algorithm's cost for this index is at most:
\begin{align*}
\sum_{k=1}^{\listmax}(x_i + k) 
& < \frac{\listmax^{2}}{2} + \listmax x_i.
\end{align*}
Using the above bound, the algorithm's insertion cost over all targeted indices is at most:
\begin{align*}
\sum_{i=1}^{s} \left(\listmax^{2} + \listmax x_{i}\right) & \leq s \listmax^{2} + \listmax \sum_{i=1}^{s}x_i \\
& = s \listmax^{2} +  b\listmax
\end{align*}
\noindent where the second line follows from noting that $\sum_{i=1}^{s}x_i = b$.
\end{proof}

\begin{lemma}\label{l:min-adv-attack-indices}
Suppose that the adversary places $b\geq 1$ objects in $s\geq 1$ targeted indices. Then,  $\AdvTotal \geq \frac{b^2}{8s}$.
\end{lemma}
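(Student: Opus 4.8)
The plan is to lower bound the adversary's RB cost $\AdvTotal$ as a function of the total number $b$ of bad objects placed into the $s$ targeted indices, recalling that a targeted index by definition also contains at least one good object. The key observation is that when the adversary inserts its $k$-th bad object into a particular index $i$, the list at that index already has length at least $k-1$ (from the previously inserted bad objects, even ignoring the good ones), so the insertion RB challenge has hardness at least $k$. Hence if the adversary places $x_i$ bad objects into index $i$, its cost for that index is at least $\sum_{k=1}^{x_i} k = \binom{x_i+1}{2} \geq x_i^2/2$. Summing over the $s$ targeted indices, $\AdvTotal \geq \sum_{i=1}^s x_i^2/2$, subject to $\sum_{i=1}^s x_i = b$.

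Next I would lower bound $\sum_{i=1}^s x_i^2$ given the constraint $\sum_i x_i = b$. Since $x \mapsto x^2$ is convex, this sum is minimized when the $x_i$ are as equal as possible, giving $\sum_i x_i^2 \geq s(b/s)^2 = b^2/s$ (this is just the power-mean or Cauchy-Schwarz inequality: $\sum x_i^2 \geq (\sum x_i)^2/s$). Combining with the previous step yields $\AdvTotal \geq b^2/(2s)$, which is in fact stronger than the claimed $b^2/(8s)$; the slack by a factor of $4$ presumably leaves room for a cleaner argument that does not need to track exactly which list entries contribute, or accounts for the possibility that bad objects could be deleted/re-inserted or that some subtlety in the ordering of good vs.\ bad insertions costs a constant factor. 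In any case, $b^2/(2s) \geq b^2/(8s)$ suffices.

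The main obstacle I anticipate is making the ``the $k$-th bad insertion into index $i$ costs at least $k$'' claim fully rigorous in the presence of deletions and interleaved good insertions. Good insertions only help (they make lists longer, so RB challenges are at least as hard), so they can be ignored for a lower bound. Deletions are the real subtlety: if the adversary deletes bad objects and re-inserts them, the list length — and hence the per-insertion cost — could be lower than the naive count suggests. I would handle this by counting only the \emph{net} bad objects currently resident, or more carefully, by arguing that each of the $b$ bad objects ever inserted into a targeted index was inserted when the list had length at least one less than the number of bad objects inserted-but-not-yet-deleted at that moment; with care this still gives $\Omega(b^2/s)$, and the factor-$8$ slack in the statement absorbs whatever constant is lost. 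Alternatively, if the lemma is meant to apply to a static snapshot (the ``current table'' language suggests this), then $b$ is simply the count of bad objects presently in targeted indices and the convexity argument goes through directly. I would state the counting claim as a short sub-argument and then invoke Cauchy-Schwarz to finish.
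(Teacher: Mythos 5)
Your proposal is correct, and it establishes the stronger bound $\AdvTotal \geq b^2/(2s)$ rather than the claimed $b^2/(8s)$. The core idea is the same as the paper's: lower bound the per-index cost by the triangular number $\sum_{k=1}^{x_i} k$, then minimize over the distribution $\{x_i\}$ subject to $\sum_i x_i = b$. Where the two proofs diverge is in the second step. The paper proves that equal distribution minimizes cost via an explicit exchange argument (two cases depending on whether $s$ divides $b$), and then works with the floor $\lfloor b/s \rfloor$, bounding the resulting sum by an integral and losing a factor of $4$ when it replaces $\lfloor b/s\rfloor$ by $\max\{1,(b/s)-1\} \geq b/(2s)$. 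You instead invoke $\sum_i x_i^2 \geq (\sum_i x_i)^2/s$ directly (Cauchy--Schwarz, or equivalently convexity of $x\mapsto x^2$), which sidesteps the casework, the floors, and the integral approximation entirely and gives a cleaner bound with a better constant. This shows the factor of $8$ in the paper's statement is not tight; it is an artifact of their estimation technique, not of the underlying combinatorics.

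One further point in your favor: your per-index lower bound ("the $k$-th bad insertion into an index costs at least $k$ because at least $k-1$ bad objects precede it, regardless of interleaved good insertions") is actually a cleaner justification than the paper's move of "assume all bad objects are added before any good objects; this only reduces the adversary's cost." Your phrasing makes it immediately clear that good insertions can only increase the adversary's per-insertion charge and so can be ignored for a lower bound, with no reordering assumption needed. Regarding your worry about deletions: the paper's proof does not address them either; the lemma is interpreted as a statement about the cost of placing the $b$ bad objects currently residing in targeted indices, exactly the "static snapshot" reading you suggest, so this is not a gap in your argument relative to the paper.
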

\begin{proof} 
Assume that the adversary's bad objects are all added before any good objects are added to the table; this only reduces the adversary's cost. Furthermore, observe that the adversary's cost from the placement of bad objects in targeted indices is minimized when these $b$ objects are spread as evenly as possible over the $s$ indices. To see this, we describe two cases.\medskip

\noindent\textbf{Case \boldmath{$b~(mod~\attackindices)= 0$}}. Consider any two indices, each with $x =  b/\attackindices$ bad objects. In this case, the adversary's cost is 
$2 \sum_{i=1}^{x} i$. In contrast, if we move $p$ bad object, where $p\in[1,x]$ from one of these indices to the other, the adversary's cost is $\sum_{j=1}^{x-p} j + \sum_{k=1}^{x+p} k$. Note that the first cost minus the second cost is:
\begin{align*}
&2 \sum_{i=1}^{x} i - \left(\sum_{j=1}^{x-p} j + \sum_{k=1}^{x+p} k\right)\\
& = \left( (x-p+1) + ... + x\right) - \left( (x+1) + ... + (x+p) \right) \\
&< 0.
\end{align*}

\noindent Therefore, deviating from the case where all indices have the same number of bad objects will increase the adversary's cost.\smallskip

\noindent\textbf{Case \boldmath{$b~(mod~\attackindices)\neq 0$}}. In this case, there will be indices with $x$ objects and at least one index with $x+1$ bad objects; thus, there can be at most a difference of $1$ bad object between any two indices.  Consider any a ``small'' index and a ``large'' index with $x$ and $x+1$ bad objects, respectively. In this case, the adversary's cost is 
$\sum_{h=1}^{x} h + \sum_{i=1}^{x+1} i$. 

Moving $p$ bad objects from the small index to the large index means that the adversary's cost is  now $\sum_{j=1}^{x-p} j + \sum_{k=1}^{(x+1)+p} k$. Note that the first cost minus the second cost is:
\begin{align*}
&\sum_{h=1}^{x} h + \sum_{i=1}^{x+1} i - \left(\sum_{j=1}^{(x+1)+p} \hspace{-7pt} j + \sum_{k=1}^{x-p} k\right) \\
& = \left( (x-p+1) + ... + (x+1)\right) - \left( (x+1) + ... + (x+1+p)\right)\\
&< 0.
\end{align*}

\noindent  Again, deviating from the case where all indices have the same number of bad objects will increase the adversary's cost.  

Given this case analysis, the adversary's cost over the $s$ targeted indices is at least:
\begin{align*}
s \sum_{i=1}^{\lfloor b/s \rfloor} i & \geq s \int_{0}^{\lfloor b/s \rfloor} i~di 
\end{align*}
\begin{align*}
&= \left(\frac{s}{2}\right)\left(\lfloor b/s \rfloor\right)^{2}\\
& \geq \left(\frac{s}{2}\right)\left( \max\{1, (b/s)-1\}\right)^{2}\\
& \geq \left(\frac{s}{2}\right)\left( \frac{b}{2s} \right)^{2}\\
& = \frac{b^2}{8s}
\end{align*}
\noindent where the first line follows since $i$ is a monotonically increasing function, and the third line holds since $b\geq s$ by definition of targeted indices and $\lfloor x \rfloor \geq x-1$.  The fourth line follows by noting that, if $\max\{1, (b/s)-1\}=1$, then $b/s \leq 2$ and so $b/(2s) \leq 1$, which justifies the inequality. Else, if $b/s - 1 > 1$, which implies $b/(2s) > 1$ iff $(b/s)-(b/2s) > 1$ iff  $(b/s)-1 > b/(2s)$, which again justifies the inequality (although, it is strict in this case). 
\end{proof}

\begin{lemma}\label{l:attack-indices}
The RB cost to the algorithm for insertions into targeted indices is
$O\left( \listmax^2\sqrt{\attackindices \mathcal{B}} \right)$.
\end{lemma}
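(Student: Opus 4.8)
\textbf{Proof plan for Lemma~\ref{l:attack-indices}.}

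The plan is to combine the upper bound on the algorithm's insertion cost into targeted indices from Lemma~\ref{l:max-alg-attack-indices} with the lower bound on the adversary's cost from Lemma~\ref{l:min-adv-attack-indices}, eliminating the free parameter $b$ (the number of bad objects placed in targeted indices). First I would recall that Lemma~\ref{l:max-alg-attack-indices} gives algorithm cost at most $s\listmax^2 + b\listmax$, while Lemma~\ref{l:min-adv-attack-indices} gives $\AdvTotal \geq b^2/(8s)$, i.e.\ $b \leq \sqrt{8s\AdvTotal} = O(\sqrt{s\AdvTotal})$. Substituting this into the algorithm's cost bound yields an upper bound of $s\listmax^2 + \listmax\cdot O(\sqrt{s\AdvTotal}) = s\listmax^2 + O(\listmax\sqrt{s\AdvTotal})$.

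The remaining work is to show that this expression is $O(\listmax^2\sqrt{s\AdvTotal})$. The $\listmax\sqrt{s\AdvTotal}$ term is trivially dominated (it is smaller by a factor $\listmax$). For the $s\listmax^2$ term, I would argue that $s = O(\sqrt{s\AdvTotal})$, equivalently $s \leq 8\AdvTotal$ (up to the constant): since each targeted index contains at least one bad object, we have $b \geq s$, and plugging $b \geq s$ into $\AdvTotal \geq b^2/(8s)$ gives $\AdvTotal \geq s^2/(8s) = s/8$, hence $s \leq 8\AdvTotal$. Then $s\listmax^2 = \sqrt{s}\cdot\sqrt{s}\cdot\listmax^2 \leq \sqrt{s}\cdot\sqrt{8\AdvTotal}\cdot\listmax^2 = O(\listmax^2\sqrt{s\AdvTotal})$. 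Combining the two pieces gives the claimed bound $O(\listmax^2\sqrt{s\AdvTotal})$.

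I do not expect any real obstacle here; this is essentially an algebraic consolidation step. The one point requiring a little care is making sure the inequality $b \geq s$ is legitimate — it follows directly from the definition of a targeted index (each such index holds at least one bad object), which Lemma~\ref{l:min-adv-attack-indices} already invokes. A secondary subtlety is the implicit assumption, stated at the top of Section~\ref{sec:insertion-cost}, that every targeted index contains exactly $\listmax$ good objects; this is what makes Lemma~\ref{l:max-alg-attack-indices} directly applicable, and since it only inflates the algorithm's cost, the resulting bound remains an upper bound for the true cost. I would close by noting that the constants hidden in the $O(\cdot)$ are small and explicit (coming only from the $\sqrt{8}$ factors and the $\listmax^2/2$ in Lemma~\ref{l:max-alg-attack-indices}), so no further tightening is needed for the statement as written.
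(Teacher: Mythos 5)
Your proof is correct and follows essentially the same route as the paper: substitute the bound $b \leq \sqrt{8s\AdvTotal}$ from Lemma~\ref{l:min-adv-attack-indices} into the insertion-cost bound from Lemma~\ref{l:max-alg-attack-indices}, then absorb the $s\listmax^2$ term by observing $s = O(\AdvTotal)$. The only cosmetic difference is that the paper simply asserts $\AdvTotal \geq s$ (each targeted index costs the adversary at least one unit of RB), whereas you re-derive $s \leq 8\AdvTotal$ from $b \geq s$ and the quadratic lower bound; both are valid and equally short.
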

\begin{proof}
By Lemma~\ref{l:min-adv-attack-indices}, $\AdvTotal\geq  \frac{b^2}{8s}$ for placing $b$ objects into targeted indices. Solving for $b$ yields $b\leq \sqrt{8 s \AdvTotal}$.  Next, we use Lemma~\ref{l:max-alg-attack-indices}, which shows that the RB cost to the algorithm due to the targeted indices is at most $s \listmax^{2} +   \listmax b$. Thus, the algorithm’s cost for good objects in targeted indices is at most:
\begin{align*}
      &= s \listmax^{2} +  b\listmax\\ 
      & \leq  \attackindices\listmax^{2} + \listmax \sqrt{8 s \AdvTotal}  \\
     & = O\left( \listmax \sqrt{s \AdvTotal} + s\listmax^{2}
 \right)
\end{align*}
\noindent where the second step holds by substituting the upper bound on $b$.  Noting that $\AdvTotal \geq s$ yields the claim.
\end{proof}

Define a \defn{good index} to be an index containing only good objects. Having analyzed the cost to targeted indices, we now analyze the additional cost to the algorithm due to good indices.

\begin{lemma}\label{l:good-indices}
With high probability, the RB cost to the algorithm for good insertions into good indices is  $O( \numIns \listmax^2) $.
\end{lemma}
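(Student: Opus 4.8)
\textbf{Proof proposal for Lemma~\ref{l:good-indices}.}

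The plan is to bound the RB cost that the algorithm pays for good insertions into good indices, where a good index is one containing only good objects. Since there is no adversarial interference at these indices, the list length at index $i$ after all insertions is exactly $\ell_i \leq \listmax$ (the number of good objects hashed there), and in fact the list length never exceeds $\listmax$ at any point during the insertion sequence. When the $k$-th good object is inserted into a good index, the server issues an RB challenge of hardness equal to the current list length plus one, which is at most $\listmax$. Hence the cost of any single good insertion into a good index is $O(\listmax)$.

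Next I would sum over all good insertions into good indices. There are at most $\numIns$ such insertions in total (this is a crude bound, but sufficient), and each costs $O(\listmax)$, so the total is $O(\numIns \listmax)$. This already gives $O(\numIns \listmax^2)$ as claimed, with room to spare; alternatively, summing index-by-index, the cost at good index $i$ is at most $\sum_{k=1}^{\ell_i}(k) = O(\ell_i^2) = O(\ell_i \listmax)$, and $\sum_i \ell_i \le \numIns$ gives the same $O(\numIns \listmax)$ bound. The only place randomness enters is in the claim that $\listmax$ is indeed an upper bound on every list length at a good index; this holds deterministically once $\listmax$ is defined as the maximum number of good objects at any index, so the ``with high probability'' qualifier in the statement is presumably inherited from the high-probability bound $\listmax = O(\lceil \numIns/t\rceil \log t)$ discussed after Theorem~\ref{thm:main-theorem}, rather than from anything internal to this lemma's argument.

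I expect no real obstacle here: the lemma is a straightforward counting argument, and the main subtlety is purely definitional — making sure that ``good index'' genuinely means no bad objects are ever inserted there, so that list lengths at such indices are controlled by $\listmax$ throughout, and confirming that the high-probability language refers to the bound on $\listmax$ itself. The looseness (the bound is really $O(\numIns\listmax)$, not $O(\numIns\listmax^2)$) suggests the authors state it in the weaker form only so that it combines cleanly with the $\listmax^2$ terms appearing in Lemmas~\ref{l:max-alg-attack-indices} and \ref{l:attack-indices} when assembling the total insertion cost.
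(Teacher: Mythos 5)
Your proposal is correct and takes essentially the same counting argument as the paper: both bound the RB cost at a good index by summing $1, 2, \ldots$ up to the list length, using that list lengths at good indices never exceed $\listmax$ by definition. Your accounting is actually slightly tighter — the paper pessimistically charges $\sum_{k=1}^{\listmax}k = O(\listmax^2)$ per good index and multiplies by $\numIns$ indices, while you sum per insertion (or use $\ell_i$ per index with $\sum_i\ell_i\le\numIns$) to get $O(\numIns\listmax)$, and you correctly diagnose both the source of the ``with high probability'' qualifier and the likely reason the authors state the looser $\listmax^2$ form.
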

\begin{proof}
There are at most $\numIns$ good indices, each with $\listmax$ good objects. The resource burning cost to the algorithm for at most $\htablesize$ such indices is at most:
    \begin{align*}
        \numIns\left( \sum_{i=1}^{\listmax}\hspace{-0pt}i \hspace{7pt}\right) & =  O\hspace{0pt}\left( \numIns 
 \listmax^2 \right)
    \end{align*}
    \noindent which completes the argument.
\end{proof}

\noindent{}We can now bound the total RB cost to the algorithm over the $\numIns$ insertions.

\begin{corollary}\label{cor:cost-single-table}
The total RB cost to the algorithm for good insertions is:
$$O\hspace{-3pt}\left(\left( \sqrt{\numIns  \AdvTotal}  + \numIns\right)\listmax^2\right).$$ 
\end{corollary}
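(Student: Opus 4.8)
The plan is to combine the bound from Lemma~\ref{l:attack-indices} on insertions into targeted indices with the bound from Lemma~\ref{l:good-indices} on insertions into good indices. Every good insertion goes into an index that, at the time we account for it, either does or does not contain a bad object; the former are (a superset of) the targeted indices and the latter are the good indices, so the total RB cost for good insertions is at most the sum of these two quantities. By Lemma~\ref{l:attack-indices} the first is $O(\listmax^2\sqrt{s\AdvTotal})$ and by Lemma~\ref{l:good-indices} the second is $O(\numIns\listmax^2)$ w.h.p.

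The one remaining step is to eliminate the parameter $s$ (the number of targeted indices) from the final expression, since the corollary is stated purely in terms of $\numIns$, $\AdvTotal$, and $\listmax$. The key observation is that a targeted index must contain at least one good object, and w.h.p.\ it contains at most $\listmax$ of them; also there are at most $\numIns$ good objects in total. Hence $s \le \numIns$ (indeed $s \le \numIns/1$ trivially, and more carefully $s\listmax \le \numIns$ gives $s \le \numIns/\listmax \le \numIns$). Substituting $s \le \numIns$ into $O(\listmax^2\sqrt{s\AdvTotal})$ yields $O(\listmax^2\sqrt{\numIns\AdvTotal})$, so the two contributions become
\begin{align*}
O\!\left(\listmax^2\sqrt{\numIns\AdvTotal}\right) + O\!\left(\numIns\listmax^2\right) = O\!\left(\left(\sqrt{\numIns\AdvTotal} + \numIns\right)\listmax^2\right),
\end{align*}
which is exactly the claimed bound.

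I would write this as a short proof: invoke the two lemmas, note the partition of good insertions into those landing in targeted indices versus good indices, bound $s \le \numIns$, and add. The only subtlety worth a sentence is that the classification of an index as ``targeted'' or ``good'' is with respect to the final table configuration, and that the pessimistic assumption (all targeted indices carry $\listmax$ good objects, good insertions arrive after bad ones) used in the earlier lemmas only inflates the cost, so the bound is valid regardless of interleaving. I do not anticipate a real obstacle here — the content is entirely in the earlier lemmas — so the ``hard part'' is merely making sure the $s \le \numIns$ substitution and the high-probability qualifier are stated cleanly, and that the additive combination of the two asymptotic bounds is phrased correctly.
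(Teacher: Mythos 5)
Your proposal matches the paper's proof: both add the targeted-index bound from Lemma~\ref{l:attack-indices} to the good-index bound from Lemma~\ref{l:good-indices} and then eliminate $s$ via $s \le \numIns$. The extra justification you supply (each targeted index holds at least one good object, so $s\le\numIns$; the partition of insertions by index type) is exactly what the paper's one-line proof leaves implicit, so there is no substantive difference in approach.
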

\begin{proof}
This follows directly by adding up the algorithm's cost incurred by all targeted indices and good indices as derived in Lemmas~\ref{l:attack-indices} and~\ref{l:good-indices}, respectively, and noting that $\numIns \geq s$.
\end{proof}

\subsection{Single and Amortized Requests}\label{sec:query-cost-latency}

We start by obtaining an upper bound on the longest list that can be created by the adversary, which in turn, provides an upper bound on the RB cost and latency for any single query. Note that this bound holds regardless of whether the corresponding object exists in the hash table, which establishes Property~\ref{property:single-operation} in Theorem~\ref{thm:main-theorem}.

\begin{lemma}\label{l:chain-length}
The maximum number of bad objects in any list is $O(\sqrt{\AdvTotal})$ and, with high probability, the RB cost and latency for any single query is 
$O\left( \sqrt{\AdvTotal}+ \listmax\right)$.
\end{lemma}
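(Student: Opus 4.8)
The plan is to bound the number of bad objects in any single list, and then convert that into the claimed bound on RB cost and latency. First I would observe that the adversary's cost $\AdvTotal$ is a lower bound on what must be spent to build up any particular list: if the adversary places $b_i$ bad objects into the list at index $i$ through insertions, then by the insertion charging rule (hardness $L_i+1$) it pays at least $\sum_{k=1}^{b_i} k = \Omega(b_i^2)$ just for those bad insertions at that index (this is essentially the single-index computation already done inside the proof of Lemma~\ref{l:min-adv-attack-indices}, specialized to one index). Since this cost is part of $\AdvTotal$, we get $b_i = O(\sqrt{\AdvTotal})$ for every index $i$. I should be slightly careful that bad objects could also be moved to the front of a list by bad queries rather than inserted fresh, but a bad query on a bad object at depth $d$ costs $d$ and does not change the \emph{number} of bad objects in the list, so the counting argument on $b_i$ is unaffected; only insertions change the bad-object count, and each such insertion into a list that already has length at least $k-1$ costs at least $k$.

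Next I would handle the good objects sharing that list. With high probability the number of good objects hashed to any fixed index is at most $\listmax$ (indeed this is the definition of $\listmax$, which holds w.h.p.\ by the balls-in-bins bound quoted in the discussion following Theorem~\ref{thm:main-theorem}). Good objects only enter a list via good insertions, and there are at most $\listmax$ of them per list w.h.p. Therefore the total length of any list is at most $b_i + \listmax = O(\sqrt{\AdvTotal} + \listmax)$ with high probability.

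Finally I would translate this length bound into the performance guarantee. For a query on an object that exists, the RB challenge has hardness equal to the object's depth $\Delta \le$ the list length, and the latency is also $\Delta$; for a query on an object that does not exist, the server traverses the whole list and issues an $L_i$-hard challenge, so both RB cost and latency equal the list length $L_i$. In either case both quantities are bounded by the list length, which is $O(\sqrt{\AdvTotal} + \listmax)$ w.h.p., giving the claim. The only mild subtlety — and the step most worth stating carefully — is the first one: making precise that the $\Omega(b_i^2)$ lower bound on the adversary's cost for a given index is genuinely a \emph{lower} bound (it is minimized when the adversary inserts its bad objects before any good objects arrive, and moving-to-front of bad objects via bad queries only adds cost without reducing $b_i$), so that $\AdvTotal \ge \Omega(b_i^2)$ holds simultaneously for the worst index. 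Everything else is routine.
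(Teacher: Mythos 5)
Your proposal is correct and follows essentially the same route as the paper: lower-bound the adversary's cost for inserting $b$ bad objects into a single list by $\sum_{j=1}^{b} j = \Omega(b^2)$, deduce $b = O(\sqrt{\AdvTotal})$, add at most $\listmax$ good objects, and observe that per-query RB cost and latency are bounded by list length. The extra care you take to note that bad queries (move-to-front) do not change the \emph{count} of bad objects in a list---only bad insertions do---is a reasonable clarification of a point the paper leaves implicit, but it does not change the argument.
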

\begin{proof}
The cost to the adversary is minimized if its bad objects are inserted in a list ahead of any good objects; thus, the cost for $b$ bad objects is at least $\sum_{j=1}^{b} j$. Given that the adversary spends $\AdvTotal$, the maximum number of bad objects $b$ that can be placed in an index  satisfies the following equation:
\begin{align*}
\AdvTotal  \geq  \sum_{j=1}^{b} j\\
> b^2/2
\end{align*} 
and solving for $b$ yields:
\begin{align*}
b < \sqrt{2\AdvTotal}. 
\end{align*}
\noindent Noting that there are at most $\listmax$ good objects in this list establishes the maximum number of bad objects in the list. Finally, since the RB cost and latency for a query are both equal to the depth of the associated object, the claim follows.
\end{proof}

Next, we examine the cost to our algorithm under a sequence of {\boldmath{$\numQue$}} good queries, whose corresponding objects exist in the table.  We analyze the MTF heuristic to show that  the adversary must continually incur an RB cost in order cause bad latency for $\numQue$.\medskip





\noindent{\bf Setup and Argument Overview.}  We first focus on a single list and the subsequence of $q$ good queries that involve this list: we denote these queries by  $Q_1, Q_2, \dots, Q_q$ for the queried (good) objects $o_1, o_2,  \dots, o_{q}$. We can later aggregate the costs to the algorithm over all lists to arrive at our final claim.

A complication arises due to the changing position of good objects over time.  For example, once a good object $o_r$ is queried under $Q_r$, for $1 \leq r\leq q$, we must keep track of $o_r$, so that we can charge the algorithm the correct amount if it is queried again later; simply assuming the object has an RB cost and latency equal to the list length would result in poor bounds. Additionally, queries prior to $Q_r$ do not only increase the depth of $o_r$, but also every other object in the list (except for the object at the ToL), thus increasing their query cost and latency. This increase in depth is illustrated in Figure~\ref{fig:query-analysis}. 

As discussed in Section~\ref{sec:technical-overview}, we use amortized analysis to handle such complications. Specifically, we use the accounting method, where we track the algorithm's cost by letting each good object have a conceptual ``wallet''; in this section, we speak of cost in terms of generic dollars.  When queried, the good object pays for this query with dollars from its wallet. The total amount of money placed in the wallets of all good objects provides an upper bound on the algorithm's RB cost and  latency for queries.

Initially, each wallet holds a dollar amount equal to the {\it insertion} cost in its list. For the purposes of our accounting-method analysis, this means that when $o_r$ is originally inserted, the algorithm conceptually pays $L+1$ dollars for the insertion and another $L+1$ dollars as a down-payment towards its next query, where $L$ is the list length immediately prior to the insertion of $o_r$. Thus, the RB cost for the first query of $o_r$ is at least partially paid for, since $o_r$'s wallet holds dollars equal to its depth when inserted.  These extra $L+1$ dollars are charged to the insertion of $o_r$; this is captured by Corollary~\ref{cor:cost-single-table}.
\medskip

\noindent{\bf Defining Rounds.} To analyze attacks on the use of MTF in a single list, we consider a sequence of \defn{rounds} also indexed by $r$, for $r=1, ..., q$.  Round $r$ starts with the adversary selecting an integer value $d_r\geq 0$ and moving $d_r$ bad objects to the HoL via $d_r$ queries, which the adversary pays for.\footnote{The adversary can never increase the depth of a good object to more than the its corresponding list length; however, the adversary can perform as many bad queries as it wishes, i.e. it can set  $d_r$ to any non-negative value.}  For each good object in this list, the algorithm places a dollar amount into each wallet equal to the increase in the depth of the corresponding good object, which is upper bounded by $d_r$. Then, query $Q_r$ is executed, which brings the queried good object $o_r$ to the HoL and reduces $o_r$'s wallet to zero. Next, we insert an additional $1$ dollar into each of the wallets of all good objects in the list whose depth increased by $1$ by bringing $o_r$ to the HoL, and also place an additional $1$ dollar into $o_r$'s wallet. After these actions are completed, round $r$ ends. 

Payments by the algorithm at the end of each round allow us to maintain the following invariant in our amortized analysis: {\it At the end of each round, for every good object, the amount of money in the good object's wallet is at least equal to its depth}. We leverage this invariant in our analysis of the algorithm's RB cost and latency. 

Finally, over all rounds, the adversary may schedule good and bad insertions arbitrarily. These insertions do {\it not} increase the depth of any good object in a list, since objects are inserted at the ToL. Consequently, the algorithm's RB cost and latency for good insertions in any list can be accounted for separately in our analysis. 
\medskip

\begin{figure*}[t]
  \centering
  \includegraphics[width=0.8\textwidth]{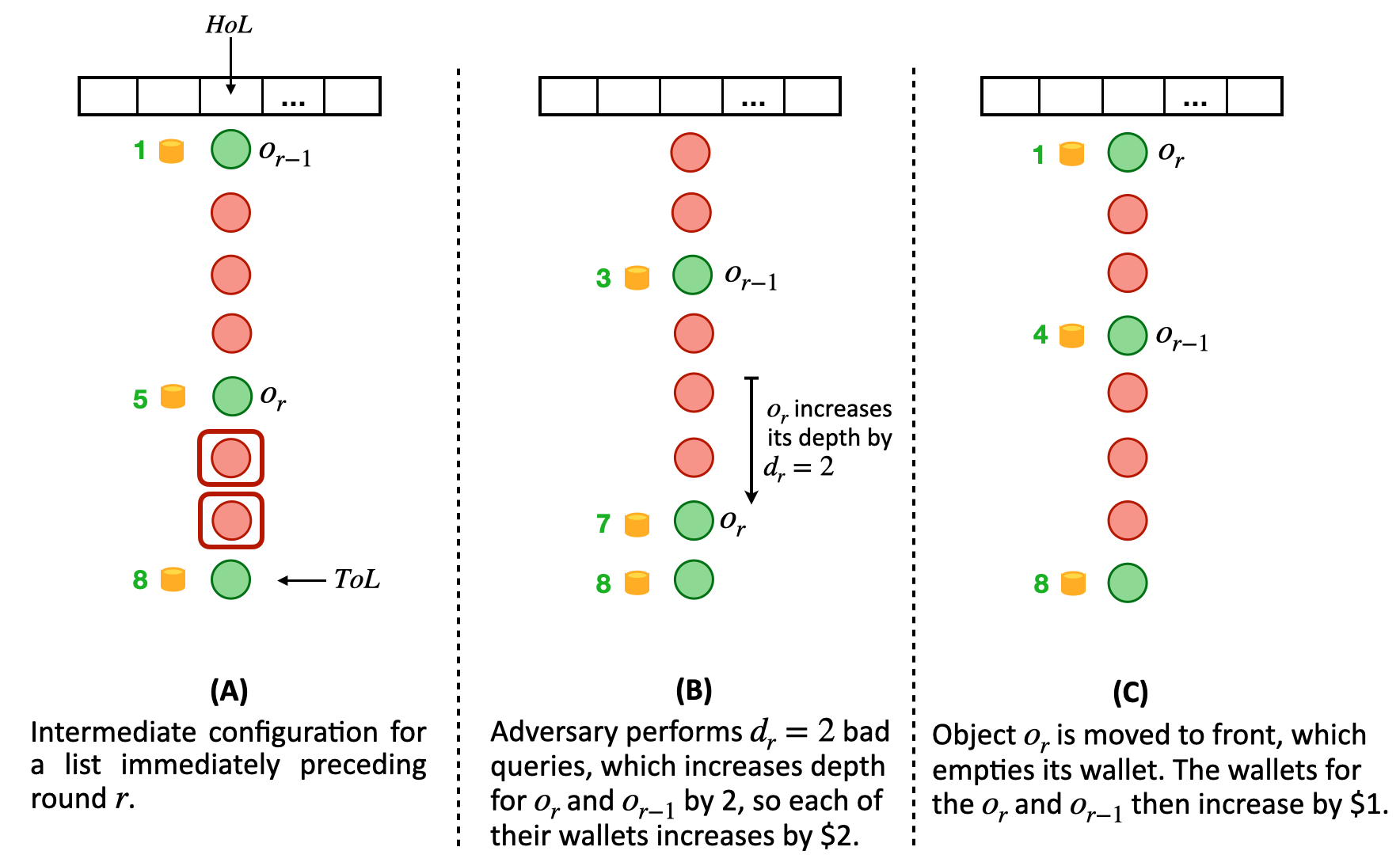}
  \caption{An illustration of the query analysis for some intermediate round $r$. Green and red balls represent good and bad objects, respectively. The amount of money in a wallet is depicted by the number of coins. {(A)} This is the hash table's state at the end of previous round  $r-1$, where object $o_{r-1}$ resides at the HoL and the good objects hold \textdollar 1, \textdollar 5, and \textdollar 8  in their respective wallets. {(B)}  The adversary chooses $d_r=2$ and so executes $2$ bad queries, which increases the depth of the first two good objects by $2$ (and does not impact the good object already at the ToL). {(C)} Query $Q_r$ for object $o_r$ is executed, which empties its wallet corresponding to the cost of $7$ for this query. This results in a depth of $1$ for $o_r$, while increasing the depth of the second good object from the HoL by $1$; therefore, the algorithm  adds \textdollar 1 to each of their wallets (but not to the wallet of the good object at the ToL). Thus, round $r$ ends with each good object holding an amount of money at least equal to its current depth.} 
  \label{fig:query-analysis}
\end{figure*}

Our next lemma considers the subsequence of queries in a single list of the hash table. We note that deletions are no worse than any queries, since deletions can only decrease the depth of a good object. Thus, for ease of presentation, our analysis only argues about insertions and queries, even though the statement of our final result will include deletions.

\begin{lemma}\label{lem:subset-queries}
Consider any fixed list at index $i$ in the hash table and suppose this list is involved in $q_i$ good queries whose corresponding objects exist in the table. Let $\InsertCost_i$ be the algorithm's total RB cost to insert the good objects in list $i$. Let $\AdvTotal_i$ be the cost to the adversary for bad queries and bad insertions in list $i$. For all of the $q_i$ queries, the total RB cost and total latency for the algorithm is at most:
     $$\InsertCost_i + \ell_i \left(q_i + \sqrt{2 q_i \AdvTotal_i}\right).$$
\end{lemma}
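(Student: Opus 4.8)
The plan is to carry out the accounting-method argument exactly as set up in the ``Defining Rounds'' paragraph, tracking two running quantities: the total amount of money ever deposited into the wallets of good objects in list $i$, and the adversary's cost $\AdvTotal_i$. The key observation is the invariant already stated---at the end of each round, every good object's wallet holds at least as many dollars as its current depth---so the RB cost (and latency, which equals the object's depth) of each query $Q_r$ is fully covered by the dollars in $o_r$'s wallet at the moment it is queried. Hence the total RB cost plus total latency over the $q_i$ queries is at most twice\footnote{Or, more carefully, we account for RB cost and latency with a single wallet since they are equal for a located query; the lemma statement absorbs constants.} the total money deposited. So it suffices to bound the total deposits.

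First I would verify the invariant by induction on $r$. The base case is the ``down-payment'' initialization: when $o_r$ is inserted at depth $L+1$, the algorithm deposits $L+1$ dollars into its wallet (this is the $\InsertCost_i$ term, charged separately via Corollary~\ref{cor:cost-single-table}). For the inductive step, in round $r$ the adversary's $d_r$ bad queries increase each good object's depth by at most $d_r$, and the algorithm deposits exactly that increase; then $Q_r$ empties $o_r$'s wallet but $o_r$'s depth becomes $1$, and the single-dollar top-ups to $o_r$ and to any object whose depth rose by one when $o_r$ moved to the HoL restore the invariant. Deletions only decrease depths, so they cannot break it. Next I would total the deposits: across all $q_i$ rounds, the per-round deposit into the at-most-$\ell_i$ good-object wallets is at most $\ell_i d_r$ from the bad queries plus at most $\ell_i$ dollars from the end-of-round top-ups, giving total deposits at most $\ell_i \sum_{r=1}^{q_i}(d_r + 1) = \ell_i\bigl(q_i + \sum_{r=1}^{q_i} d_r\bigr)$, on top of the $\InsertCost_i$ initialization.

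The remaining step is to bound $\sum_{r=1}^{q_i} d_r$ in terms of $\AdvTotal_i$. This is the crux of the argument and mirrors the ``depth charge'' intuition from the technical overview. The adversary's cost for the $d_r$ bad queries in round $r$ is at least $\sum_{j=1}^{d_r} j > d_r^2/2$, since to raise depths the queried bad objects must sit below the relevant good objects, and by the depth-based charging each such bad query costs at least its (distinct, increasing) depth; summing over rounds, $\AdvTotal_i \geq \tfrac12\sum_{r=1}^{q_i} d_r^2$. By the Cauchy--Schwarz inequality (equivalently Jensen for the convex function $x\mapsto x^2$), $\sum_{r=1}^{q_i} d_r \leq \sqrt{q_i \sum_{r=1}^{q_i} d_r^2} \leq \sqrt{2 q_i \AdvTotal_i}$. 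Substituting into the deposit bound gives total cost at most $\InsertCost_i + \ell_i\bigl(q_i + \sqrt{2 q_i \AdvTotal_i}\bigr)$, as claimed.

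The main obstacle I anticipate is not any single inequality but pinning down precisely which ``charges'' the adversary is forced to pay and ensuring no double-counting: one must argue carefully that in round $r$ each of the $d_r$ bad queries genuinely contributes to increasing good-object depth (so it is legitimate to lower-bound its individual RB cost by its depth, which exceeds that of the good objects above it), and simultaneously that the $\AdvTotal_i$ accounting is disjoint across rounds and disjoint from bad-insertion costs. A secondary subtlety is confirming that the ToL object, whose depth the adversary cannot increase, and objects inserted mid-sequence are handled correctly---but the round structure already isolates these, and since fresh insertions go to the tail they never inflate anyone else's depth, so their cost is safely deferred to $\InsertCost_i$.
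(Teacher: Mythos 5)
Your proposal is correct and mirrors the paper's proof step for step: the same accounting-method invariant (wallet balance at least current depth), the same inductive verification, the same per-round deposit bound $\ell_i\sum_r(d_r+1)$, the same adversary lower bound $\AdvTotal_i \geq \tfrac12\sum_r d_r^2$, and the same Cauchy--Schwarz/Jensen step to conclude $\sum_r d_r \leq \sqrt{2q_i\AdvTotal_i}$. The only cosmetic difference is your initial ``at most twice'' phrasing, which you already correct in the footnote by observing that RB cost and latency of a located query are both exactly the object's depth, so the single bound applies to each.
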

\begin{proof}
Our aim is to guarantee that, prior to round $r\geq 1$, each good object has a number of dollars in its wallet equal to its depth. Given this, we then argue that the number of dollars in each object's wallet can pay for the cost of querying the object; notably, this cost can be either  RB cost or latency, since they are both equal to the object's depth.
\medskip

\noindent{\bf Round 1.} We first prove that, for each good object, the depth is at most the number of dollars in the corresponding object's wallet. Initially, sometime prior to $Q_1$, $o_1$ is be inserted (since, by assumption, it exists in the table when $Q_1$ is executed). The wallet of $o_1$
contains a number of dollars equal to its depth when $o_i$ is inserted. This is done by having $o_i$ pay  $2(L+1)$ when inserted, where $L$ is the list length immediately prior to the $o_1$'s insertion. The first $L+1$ dollars pay for the insertion, while the \defn{extra} $L+1$ dollars are held in the $o_1$'s wallet to help pay for the cost when it is next queried. 

Any increase in depth experienced by good objects due to $d_1$ bad queries in round $1$ results in a matching number of dollars added to each wallet. Thus, when $Q_1$ is executed, $o_1$'s wallet has sufficient funds to pay for the latency of the query. Object $o_1$ moves to the HoL, and every good object whose depth increased by $1$, along with $o_1$, has $1$ dollar added to its corresponding wallet. These deposits to the wallets ensures that the invariant holds at the end of round $1$.\medskip

\noindent{\bf Round {\boldmath{$\geq 2$}}.} At the end of round $r-1$, for $r\geq 2$, each good object in the list holds a number of dollars at least equal to its depth. Thus, in round  $r$, $o_r$ has sufficient funds in its wallet to pay for the RB cost of $Q_r$. The adversary's $d_r$ bad queries increase the depth of each good object by at most an additional $d_r$, and $Q_r$ results in all other good objects increasing their depth by at most $1$. Since the algorithm puts dollars in each good objects' wallets equal to the corresponding increase in depth due to bad queries, the invariant holds at the end of round $r$. 

\medskip

\noindent{\bf Total Cost.} 
The algorithm pays the following.  First, the cost for all good insertions is $\InsertCost_i$. Second, the algorithm pays for all the increases in depth over all rounds for all $\ell_i$ good objects in this list, which amounts to at most $\ell_i\sum_{r=1}^{q_i} (d_r+1)$ dollars.


In contrast, the total RB cost to the adversary is at least:
\begin{align}
 \AdvTotal &\geq \sum_{r=1}^{q_i} \sum_{j=1}^{d_r} j \nonumber \\
 & \geq \frac{1}{2}\sum_{r=1}^{q_i} d_r^2 \nonumber\\
& \geq \frac{1}{2q_i}\left(\sum_{r=1}^{q_i} d_r\right)^2\label{eqn:adv-query-cost}
\end{align}
\noindent where the last line follows from Jensen's inequality for convex functions. By substituting  into the algorithm's cost, we have that the algorithm pays at most:
\begin{align*}
\InsertCost_i + \ell_i\sum_{r=1}^{q_i} (d_r+1) &=  \InsertCost_i + \ell_i q_i + \ell_i\sum_{r=1}^{q_i} d_r\\
& =  \InsertCost_i + \ell_i q_i +\ell_i \sqrt{2 q_i \AdvTotal_i}
\end{align*}
\noindent where the second line follows by solving for $\sum_{r=1}^{q_i} d_r \leq \sqrt{2q_i\AdvTotal_i}$ in Equation~\ref{eqn:adv-query-cost}.  Since $\InsertCost_i$ is measured in RB cost, this concludes the bound on RB cost.

To derive total latency, recall that the RB cost for an insertion equals the depth of the object being inserted. In other words,  $\InsertCost_i$ equals the sum of the depths of the good objects when they are inserted.  Thus, the extra dollars can also be viewed as being stored in $o_i$'s wallet to help pay the latency when the object is next queried. This leads to the same bound on latency.
\end{proof}

\noindent We can now account for the algorithm's total RB cost and total latency for all good queries $\numQue$.

\begin{lemma}\label{lem:query-cost-total}
The total RB cost and the total latency of the algorithm due to the $\numQue$ queries is:
$$O\left( \left(\numIns + \numQue + \sqrt{(\numIns + \numQue)\AdvTotal} \right) \listmax^2\right).$$
\end{lemma}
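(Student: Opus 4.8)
The plan is to aggregate the per-list bound from Lemma~\ref{lem:subset-queries} over all lists of the hash table, and then invoke Corollary~\ref{cor:cost-single-table} to absorb the insertion terms. Let the lists be indexed by $i$, let $q_i$ be the number of good queries touching list $i$ (so $\sum_i q_i = \numQue$), let $\AdvTotal_i$ be the adversary's RB cost for bad queries and bad insertions in list $i$ (so $\sum_i \AdvTotal_i \leq \AdvTotal$), and let $\InsertCost_i$ be the algorithm's RB cost to insert the good objects in list $i$ (so $\sum_i \InsertCost_i$ equals the total insertion RB cost). By Lemma~\ref{lem:subset-queries}, the total RB cost and total latency for queries touching list $i$ is at most $\InsertCost_i + \ell_i(q_i + \sqrt{2 q_i \AdvTotal_i})$, and since $\ell_i \leq \listmax$ for every list, this is at most $\InsertCost_i + \listmax(q_i + \sqrt{2 q_i \AdvTotal_i})$. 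Summing over $i$ gives a total bounded by $\sum_i \InsertCost_i + \listmax \numQue + \listmax \sqrt{2}\sum_i \sqrt{q_i \AdvTotal_i}$.

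The main technical step is simplifying $\sum_i \sqrt{q_i \AdvTotal_i}$. Here I would apply the Cauchy--Schwarz inequality: $\sum_i \sqrt{q_i}\sqrt{\AdvTotal_i} \leq \sqrt{\sum_i q_i}\sqrt{\sum_i \AdvTotal_i} \leq \sqrt{\numQue \AdvTotal}$. This is exactly the step flagged in the Technical Overview, and it is the crux of the argument, since a naive term-by-term bound would be far too weak. Plugging this in, the query-related portion of the cost is $O(\listmax \numQue + \listmax\sqrt{\numQue \AdvTotal})$, which is $O((\numQue + \sqrt{\numQue\AdvTotal})\listmax)$, and hence also $O((\numQue + \sqrt{\numQue\AdvTotal})\listmax^2)$.

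It remains to handle $\sum_i \InsertCost_i$, the total RB cost of all good insertions. By Corollary~\ref{cor:cost-single-table}, this is $O((\sqrt{\numIns \AdvTotal} + \numIns)\listmax^2)$. Adding the two contributions, the total RB cost and total latency for the $\numQue$ queries is
\begin{align*}
O\!\left((\sqrt{\numIns \AdvTotal} + \numIns)\listmax^2\right) + O\!\left((\numQue + \sqrt{\numQue\AdvTotal})\listmax^2\right)
&= O\!\left(\left(\numIns + \numQue + \sqrt{\numIns\AdvTotal} + \sqrt{\numQue\AdvTotal}\right)\listmax^2\right),
\end{align*}
and since $\sqrt{\numIns\AdvTotal} + \sqrt{\numQue\AdvTotal} = O(\sqrt{(\numIns+\numQue)\AdvTotal})$ (as $\sqrt{a}+\sqrt{b} = \Theta(\sqrt{a+b})$ for nonnegative $a,b$), this matches the claimed bound $O((\numIns + \numQue + \sqrt{(\numIns+\numQue)\AdvTotal})\listmax^2)$. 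I would also note the ``high probability'' qualifier carries over from Corollary~\ref{cor:cost-single-table} (via Lemma~\ref{l:good-indices}), which is the only randomized ingredient; everything else is deterministic. The one subtlety to be careful about is bookkeeping of the ``extra'' down-payment dollars: Lemma~\ref{lem:subset-queries} charges an extra $L+1$ dollars per good object to its insertion, and I must make sure these are already subsumed in the $\InsertCost_i$ accounting (they are, by the constant-factor slack in Corollary~\ref{cor:cost-single-table}, since doubling the insertion cost changes nothing asymptotically). Apart from that, the proof is a straightforward summation plus two classical inequalities.
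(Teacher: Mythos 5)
Your proposal is correct and follows essentially the same route as the paper's proof: decompose by list, apply Lemma~\ref{lem:subset-queries} with $\ell_i \leq \listmax$, collapse the $\sum_i \sqrt{q_i \AdvTotal_i}$ term via Cauchy--Schwarz with $\sum_i q_i = \numQue$ and $\sum_i \AdvTotal_i \leq \AdvTotal$, and absorb $\sum_i \InsertCost_i$ via Corollary~\ref{cor:cost-single-table}. Your closing remarks on the constant-factor slack for the down-payment dollars and the $\sqrt{a}+\sqrt{b}=\Theta(\sqrt{a+b})$ simplification are minor but sound additions not spelled out in the paper's own write-up.
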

\begin{proof}
Let $S$ denote the indices of the hash table where at least one good query takes place. For $i\in S$, $q_i$ is the number of good queries that occur in this list; $\InsertCost_i$ is algorithm's RB cost to insert the good objects in list $i$; and $B_{i}$ be the amount spent by the adversary on bad queries in this list.

By Lemma~\ref{lem:subset-queries}, over all good queries, the total RB cost and the total latency are each at most:
\begin{align*}
&\hspace{9pt}\sum_{i\in S} \left( \InsertCost_i + \ell_i \left(q_i + \sqrt{2q_i\AdvTotal_i}\right)  \right) \\
& \leq \sum_{i\in S} \left( \InsertCost_i + \listmax \left(q_i + \sqrt{2q_i\AdvTotal_i}\right)  \right) 
\end{align*}
\begin{align*}
&\leq  \left( \sum_{i\in S}\InsertCost_i\right)  + \left(\listmax \sum_{i\in S} q_i \right) + 
 \listmax\sqrt{ 2\left( \sum_{i\in S}B_i  \right) \left(  \sum_{i\in S}q_i  \right)   }\\
 &\leq  \left(  \sum_{i\in S}\InsertCost_i\right)  +   \listmax \left( \numQue + \sqrt{ 2 \AdvTotal \numQue }\right)\\
&= O\left(\left(\numIns +\sqrt{\numIns\AdvTotal}\right)\listmax^2 +
\left(\numQue + \sqrt{\numQue\AdvTotal  }\right)\listmax\right)
\end{align*}
where the first line follows since $\ell_i \leq \listmax$. The second line is obtained via the Cauchy–Schwarz inequality $\left(\sum_{i\in S} \sqrt{B_i} \sqrt{q_i}\leq \sqrt{\sum_i B_i \sum_i q_i }~\right)$. The third line is derived from noting $\sum_{i\in S}q_i$ $= \numQue$ and $\sum_{i\in S}B_i \leq \AdvTotal$. The fourth line follows from Corollary~\ref{cor:cost-single-table}, which states that $\sum_{i\in S}  \InsertCost_i = O(\numIns +\sqrt{\numIns\AdvTotal})$. 
\end{proof}

\noindent We are now ready to bound the total RB cost and the total latency for all good insertions, along with all good queries and good deletions whose corresponding objects are in the hash table. Corollary~\ref{cor:total-worst} establishes the expression in Property~\ref{property:worst-case-costs} of Theorem~\ref{thm:main-theorem}.

\begin{corollary}\label{cor:total-worst}
Each of the total RB cost and the total latency for the algorithm is:
$$O\left( \left(\numIns + \numQue + \numDel + \sqrt{(\numIns + \numQue + \numDel)\AdvTotal} \right) \listmax^2\right).$$
\end{corollary}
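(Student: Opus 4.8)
The plan is to combine the two pieces already established---Corollary~\ref{cor:cost-single-table} for insertions and Lemma~\ref{lem:query-cost-total} for queries---and then argue that deletions add nothing new asymptotically. First I would invoke the observation, made just before Lemma~\ref{lem:subset-queries}, that a deletion is never worse than a query in terms of increasing depth: a deletion traverses the list exactly as a query does (incurring RB cost and latency equal to the located object's depth, or the list length if the object is absent), but then removes the object rather than moving it to the HoL, which can only decrease the depths of other good objects. Hence every deletion can be charged exactly as if it were a query, and the amortized accounting argument of Lemma~\ref{lem:subset-queries} goes through verbatim with $q_i$ replaced by the combined count of good queries and good deletions in list $i$.

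Next I would re-run the aggregation of Lemma~\ref{lem:query-cost-total} with $\numQue$ replaced by $\numQue + \numDel$ throughout: the per-list bound becomes $\InsertCost_i + \ell_i(q_i' + \sqrt{2 q_i' \AdvTotal_i})$ where $q_i'$ now counts good queries and good deletions at index $i$, and summing over the relevant indices $S$ gives, via Cauchy--Schwarz exactly as before, a bound of $O\big((\sum_{i\in S}\InsertCost_i) + \listmax(\numQue + \numDel + \sqrt{(\numQue+\numDel)\AdvTotal})\big)$. Applying Corollary~\ref{cor:cost-single-table} to the $\sum_i \InsertCost_i$ term yields $O((\numIns + \sqrt{\numIns\AdvTotal})\listmax^2)$ for the insertion contribution. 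Adding the insertion, query, and deletion contributions and coarsening $\sqrt{\numIns\AdvTotal} + \sqrt{\numQue\AdvTotal} + \sqrt{\numDel\AdvTotal} = O(\sqrt{(\numIns+\numQue+\numDel)\AdvTotal})$ and $\listmax \le \listmax^2$ gives the claimed expression $O((\numIns + \numQue + \numDel + \sqrt{(\numIns+\numQue+\numDel)\AdvTotal})\listmax^2)$.

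One subtlety worth stating explicitly is why the wallet invariant survives interleaved deletions: when a good object is deleted it simply vanishes along with whatever remains in its wallet, and any other good object whose depth drops only has a larger surplus relative to its (now smaller) depth, so the invariant ``dollars in wallet $\ge$ depth'' is preserved; a bad deletion behaves like a bad query in its effect on good-object depths and is already covered by the adversary's cost accounting. The latency claim follows identically, since RB cost and latency coincide per operation (depth for a found object, list length otherwise) and the same dollars bound both.

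I expect the main (minor) obstacle to be bookkeeping rather than any genuine difficulty: one must be careful that the adversary's total cost $\AdvTotal$ is not double-counted when it appears both in the insertion bound (via Lemma~\ref{l:min-adv-attack-indices}) and in the query/deletion bound (via Equation~\ref{eqn:adv-query-cost}). This is handled as in Lemma~\ref{lem:query-cost-total}: the adversary's spending is partitioned---$\AdvTotal \ge \sum_i \AdvTotal_i$ where $\AdvTotal_i$ counts only bad queries and bad insertions charged to list $i$---so each dollar the adversary spends is attributed to at most one list, and the bound $\sum_{i\in S}\AdvTotal_i \le \AdvTotal$ used inside Cauchy--Schwarz remains valid even after folding deletions into the query subsequences. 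With that accounting in place the corollary is immediate from the two cited results.
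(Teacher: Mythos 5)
Your proposal is correct and follows essentially the same route as the paper: the paper's proof likewise invokes Lemma~\ref{lem:query-cost-total} (whose bound already incorporates $\sum_i\InsertCost_i$ via Corollary~\ref{cor:cost-single-table}) and then replaces $\numQue$ by $\numQue+\numDel$ using the observation, made just before Lemma~\ref{lem:subset-queries}, that deletions are no more costly than queries. The extra detail you supply---why the wallet invariant survives a deletion, and why the per-list adversary costs $\AdvTotal_i$ (and $B_i$) are bounded by the single total $\AdvTotal$ without an invalidating double count---is sound and merely spells out points the paper treats as already established.
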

\begin{proof}
Adding the cost from all good insertions in the table, given by Corollary~\ref{cor:cost-single-table}, alters the asymptotic cost given in Lemma~\ref{lem:query-cost-total}. Then, since (as discussed earlier) deletions are no more costly than queries, we may replace $\numQue$ with $\numQue+\numDel$ to obtain the result.
\end{proof}

\subsection{Randomly Queried Indices}\label{sec:expected-query-cost}

Note that Corollary~\ref{cor:total-worst} addresses a challenging setting: deriving worst case bounds where a significant attack may be underway and the good requests are scheduled by the adversary. We conclude this section on a more optimistic note in regards to $Q$ queries that occur in randomly chosen indices. Recall (from Section~\ref{sec:main-results}) that $\ell_i$ is the maximum number of good objects that are ever in bin $i$ and that $\listave = (1/t)\sum_{i=1}^t \ell_i$.  We show that when $Q$ is large relative to $\AdvTotal$ and $\listmax$, the average query cost is $O(\listave)$ in expectation.

This result implies that, if $\listave = O(1)$ and the adversary does not launch a significant attack, then we should expect per-query performance that matches that of standard hash tables in benign settings. The following result establishes Property~\ref{property:expected-costs} of Theorem~\ref{thm:main-theorem}.

\begin{lemma}\label{lem:expected-queries}
Consider $Q$ queries where the corresponding objects belong to indices of the hash table chosen \uar.  For $Q \geq \listmax^2 \AdvTotal$, the average cost per query is $O(\listave)$ in expectation.
\end{lemma}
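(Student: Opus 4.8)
The plan is to decompose the $Q$ queries by the index (list) into which they fall, apply the single-list bound of Lemma~\ref{lem:subset-queries} to each list, and then take expectations, using that each query lands in a \uar index so the per-list query count $Q_i$ has $E[Q_i] = Q/t$. First I would write $Q_i$ for the (random) number of the $Q$ queries that fall in index $i$, and $\AdvTotal_i$ for the adversary's cost on list $i$, so that $\sum_i \AdvTotal_i \le \AdvTotal$ and $\sum_i Q_i = Q$. By Lemma~\ref{lem:subset-queries}, the total RB cost and latency over all $Q$ queries is at most $\sum_i \bigl(\InsertCost_i + \ell_i (Q_i + \sqrt{2 Q_i \AdvTotal_i})\bigr)$. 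The insertion terms $\sum_i \InsertCost_i$ are handled by Corollary~\ref{cor:cost-single-table}, giving $O((\numIns + \sqrt{\numIns\AdvTotal})\listmax^2)$, which is dominated once we divide by $Q$ under the hypothesis $Q \ge \listmax^2\AdvTotal$ (and, implicitly, $Q$ is at least on the order of the number of inserted objects being queried); I would state the needed side condition explicitly. Using $\ell_i \le \listmax$ and bounding the remaining sum, the query-driven cost is at most $\listmax \sum_i Q_i + \listmax \sum_i \sqrt{2 Q_i \AdvTotal_i}$.

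Next I would take expectations. The first piece is immediate: $E[\listmax \sum_i Q_i] = \listmax Q$. For the second piece, by Cauchy--Schwarz $\sum_i \sqrt{Q_i\AdvTotal_i} \le \sqrt{(\sum_i \AdvTotal_i)(\sum_i Q_i)} \le \sqrt{\AdvTotal\, Q}$ deterministically, which already gives $E[\listmax\sum_i\sqrt{2Q_i\AdvTotal_i}] \le \listmax\sqrt{2\AdvTotal Q}$ without needing Jensen. (The technical overview suggests instead bounding $\sum_i \sqrt{\AdvTotal_i}\,E[\sqrt{Q_i}]$ and using Jensen's inequality $E[\sqrt{Q_i}] \le \sqrt{E[Q_i]} = \sqrt{Q/t}$, followed by Cauchy--Schwarz over $i$ on $\sum_i \sqrt{\AdvTotal_i}$ against $\sum_i 1 \le t$; I would present whichever of these two routes gives the cleaner constant, noting both yield $O(\listmax\sqrt{Q\AdvTotal})$.) Combining, the expected total cost is $O\bigl(\listave\, Q + \listmax\sqrt{Q\AdvTotal}\bigr)$ once the insertion terms are folded in and $\sum_i \ell_i = t\,\listave$ is used for the deterministic part via $\sum_i Q_i$ — here one has to be slightly careful, since $\sum_i \ell_i Q_i$ is not literally $\listave\sum_i Q_i$; I would instead bound $\ell_i \le \listmax$ throughout and then observe that the $\listmax Q$ term can be replaced by $O(\listave Q)$ in expectation by noting $E[\ell_i Q_i] = \ell_i (Q/t)$, hence $E[\sum_i \ell_i Q_i] = (Q/t)\sum_i \ell_i = Q\listave$, which is the cleaner way and I would use it.

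Finally, dividing by $Q$: the expected average cost per query is $O\bigl(\listave + \listmax\sqrt{\AdvTotal/Q}\bigr)$, and the hypothesis $Q \ge \listmax^2\AdvTotal$ makes $\listmax\sqrt{\AdvTotal/Q} \le 1 = O(\listave)$ (using $\listave \ge 1$ whenever there is at least one good object, which holds when the queries are nonvacuous), so the average is $O(\listave)$ in expectation. I would also need to dispatch the leftover insertion contribution: $\frac{1}{Q}\cdot O((\numIns+\sqrt{\numIns\AdvTotal})\listmax^2)$ is $O(\listave)$ provided $Q = \Omega(\numIns\listmax^2)$ — since the $Q$ queries are for objects that exist in the table, it is natural to assume $Q$ is at least comparable to the number of such objects, but strictly speaking this requires an assumption beyond $Q \ge \listmax^2\AdvTotal$, so I would either state it or absorb $\numIns$ appropriately. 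The main obstacle I anticipate is exactly this bookkeeping around the insertion cost term and the $\ell_i$-versus-$\listave$ distinction: the probabilistic core (linearity of expectation plus one concavity/Cauchy--Schwarz step) is routine, but making the hypothesis "$Q \ge \listmax^2\AdvTotal$" genuinely sufficient — rather than needing an additional lower bound on $Q$ relative to $\numIns$ — is the delicate point, and I would resolve it by taking expectations of $\sum_i \ell_i Q_i$ directly as above and being explicit about any standing assumption that $Q$ dominates the number of inserted-and-queried objects.
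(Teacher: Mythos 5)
Your approach is essentially the same as the paper's: decompose the $Q$ queries by list, apply Lemma~\ref{lem:subset-queries} per list, take expectations using $E[Q_i]=Q/t$, and fold the square-root term into $O(\listmax\sqrt{Q\AdvTotal})$. One genuine variation you offer — bounding $\sum_i \sqrt{Q_i\AdvTotal_i} \le \sqrt{(\sum_i Q_i)(\sum_i\AdvTotal_i)} \le \sqrt{Q\AdvTotal}$ deterministically by Cauchy--Schwarz — is correct and slightly cleaner than the paper's route, which instead applies Jensen's inequality to get $E[\sqrt{Q_i}]\le\sqrt{Q/t}$ and then uses the uniformity a second time; both yield the same $O(\listmax\sqrt{Q\AdvTotal})$ term, and the randomness of the indices is genuinely needed only for the $E[\sum_i\ell_i Q_i]=Q\listave$ step, which you identify and handle the same way the paper does. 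You also correctly note that $\listave\ge 1$ is being used silently to absorb the constant.

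Where you actually diverge from (and improve on) the paper is in your care with the $\InsertCost_i$ term. Lemma~\ref{lem:subset-queries} gives an upper bound of $\InsertCost_i + \ell_i(q_i+\sqrt{2q_i\AdvTotal_i})$, but the paper's proof of this lemma simply writes ``the RB cost and latency for the $i$-th index are each at most $\ell_i(Q_i+\sqrt{2Q_i\AdvTotal_i})$,'' i.e.\ it silently drops $\InsertCost_i$. This is only justified if ``cost per query'' is read as excluding the amortized insertion down-payments, which the paper does not say explicitly. Your observation that retaining $\InsertCost_i$ forces an additional side condition ($Q$ large relative to $\numIns\listmax^2$, or an explicit separation of insertion cost from query cost) is a legitimate gap in the paper's exposition, not a flaw in your proposal. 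If you present the proof, either state that insertion cost is accounted separately (as Corollary~\ref{cor:cost-single-table} / Corollary~\ref{cor:total-worst} already do) and is not part of ``cost per query,'' or add the explicit hypothesis you identified; either choice closes the hole the paper leaves open.
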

\begin{proof}
By Lemma~\ref{lem:subset-queries}, the RB cost and latency  for the $i$-th index are each at most:
\begin{align*}
    \ell_i \left(Q_i + \sqrt{2 Q_i \mathcal{\AdvTotal}_i}\right)
\end{align*}
\noindent where $\ell_i$, $Q_i$, and $\AdvTotal_i$ are the number of good objects, number of good queries, and adversarial cost in the $i$-th index. Given that each query occurs in an index selected uniformly at random, in expectation over $Q_i$ the RB cost and latency are each at most:
\begin{align*}
    \ell_i E[Q_i] + \ell_i \sqrt{2\mathcal{\AdvTotal}_i}E[\sqrt{Q_i}] & \leq \ell_i (Q/t) + \ell_i \sqrt{2\mathcal{\AdvTotal}_i} \sqrt{Q/t}
\end{align*}
\noindent where the second step holds since $E[Q_i] = Q/t$, and  by applying Jensen's inequality for concave functions (i.e., $ E[\varphi(X)] \leq \varphi(E[X])$, where $\varphi$ is a concave function and $X$ is a random variable). Summing the above over all bins, we can bound the total RB cost and latency for queries to be at most:
\begin{align*}
\sum_{i=1}^t \ell_i \left (Q/t + \sqrt{2\mathcal{\AdvTotal}_i} \sqrt{Q/t} \right) & =  O( 
 Q \,\listave)  + O \left( \sqrt{2Q/t} \sum_{i=1}^t \ell_i \sqrt{\AdvTotal_i} \right)\\
 & \leq O(Q\,\listave) + O \left( \left(\sqrt{2Q/t} \right)  t \ell_M  \sqrt{\AdvTotal/t} \right) \\
  & \leq O\left(Q\,\listave +   \ell_M \sqrt{Q\AdvTotal}  \right)
\end{align*}
\noindent where the first line follows since $\listave= (1/t)\sum_{i=1}^{t} \ell_i$, the second line follows from again noting that $\ell_i \leq \listmax$, and the third line follows from simplifying terms. For $Q \geq \listmax^2 \AdvTotal$, if this expectation holds, then the average cost per query is $O(\listave)$, as claimed.
\end{proof}






\section{Conclusion and Future Work}

In this paper, we have designed and analyzed an RB-based defense against algorithmic complexity attacks on hash tables. The cost of our defense grows slowly with the cost of the adversary. For example, when there is little-to-no attack, the defense has low RB cost and latency; conversely,  when the attack is significant, our defense imposes an asymptotically-higher RB cost on the adversary.

To the best of our knowledge, our defense is the first to leverage RB for defending against ACA attacks, and there are several promising questions for future work.  First, our current work addresses fixed-size hash tables, which captures settings  where there is no need to resize the hash table, or it is not desirable to do so. However, can we extend our approach to the case when legitimate system load is unpredictable {\it and} there exist adequate server-side resources to resize many times? 

Second, can we extend our approach to other data structures that employ hash functions, such as Bloom filters? Similarly, decentralized data structures other than those based on hash tables---such as skip graphs~\cite{aspnes:skip}---might also benefit from such a defense. 

Third, can we derive asymptotically matching lower bounds for the algorithm's performance? Is there a fundamental trade off between these RB cost and latency?

Finally, machine learning (ML) has become an important tool for improving the performance of algorithms~\cite{roughgarden2021,chakraborty2023bankrupting}. In our setting, it would be interesting to determine if ML predictions about whether a request is good or bad can be leveraged to improve the bounds on the cost ratio or latency.

\bibliographystyle{plain} 

\end{document}